\newtheorem{theorem}{Theorem}[section]
\begin{document}

\begin{textblock*}{\paperwidth}(2.0cm, 1.5cm)\includegraphics[height=1.3cm]{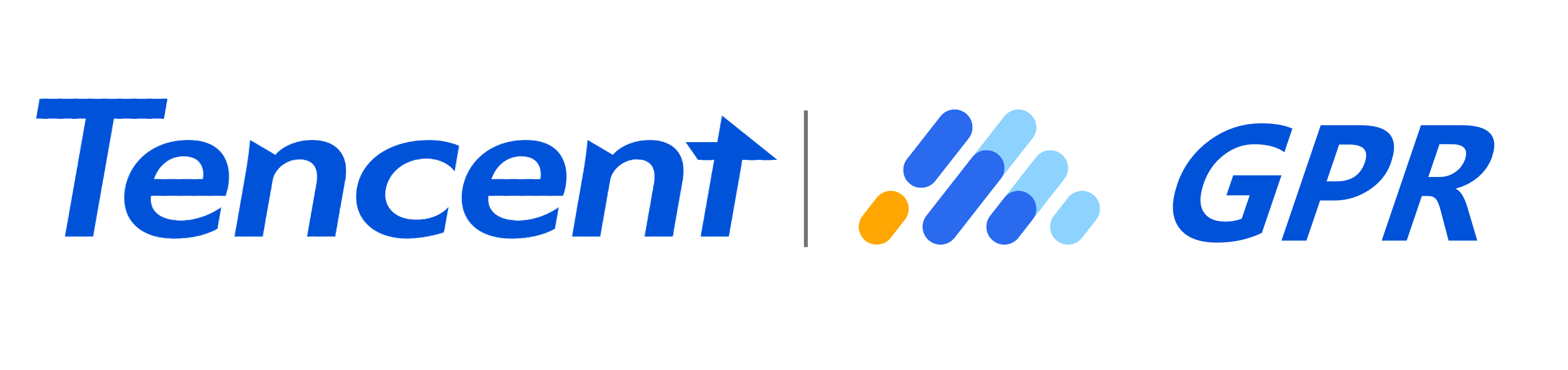} \end{textblock*}

\title{SCoTER: Structured Chain-of-Thought Transfer for Enhanced Recommendation}


\author{Jie Jiang}
\authornote{Both authors contributed equally to this research.} 
\affiliation{%
  \institution{Tencent}
    \state{Beijing}
  \country{China}
}
\email{zeus@tencent.com}

\author{Yang Wu}
\authornotemark[1]
\affiliation{%
  \institution{Tencent}
  \state{Beijing}
  \country{China}
}
\email{samuelywu@tencent.com}

\author{Qian Li}
\authornotemark[1]
\affiliation{%
  \institution{Tencent}
  \state{Beijing}
  \country{China}
}
\email{kathieqli@tencent.com}

\author{Yuling Xiong}
\affiliation{%
  \institution{Tencent}
  \state{Beijing}
  \country{China}
}
\email{whitnyxiong@tencent.com}

\author{Hongbo Tang}
\affiliation{%
  \institution{Tencent}
    \state{Beijing}
  \country{China}
}
\email{hardytang@tencent.com}

\author{Xun Liu}
\affiliation{%
  \institution{Tencent}
    \state{Beijing}
  \country{China}
}
\email{	reubenliu@tencent.com}

\author{Haoze Wang}
\affiliation{%
  \institution{Tencent}
    \state{Beijing}
  \country{China}
}
\email{howsonwang@tencent.com}

\author{Jun Zhang}
\authornote{Corresponding author.}
\affiliation{%
  \institution{Tencent}
    \state{Beijing}
  \country{China}
}
\email{neoxzhang@tencent.com}

\author{Huan Yu}
\affiliation{%
  \institution{Tencent}
    \state{Beijing}
  \country{China}
}
\email{	huanyu@tencent.com}

\author{Hailong Shi}
\authornotemark[2]
\affiliation{%
  \institution{Chinese Academy of Sciences}
    \state{Beijing}
  \country{China}
}
\email{shihailong2010@gmail.com}

\renewcommand{\shortauthors}{Qian Li}

\begin{abstract}
Harnessing the reasoning power of Large Language Models (LLMs) for recommender systems is hindered by two fundamental challenges. First, current approaches lack a mechanism for automated, data-driven discovery of effective reasoning patterns, relying instead on brittle manual templates or unstable zero-shot prompting. Second, they employ structure-collapsing integration: direct prompting incurs prohibitive online inference costs, while feature extraction collapses reasoning chains into single vectors, discarding stepwise logic. To address these challenges, we propose SCoTER (Structured Chain-of-Thought Transfer for Enhanced Recommendation), a unified framework that treats pattern discovery and structure-aware transfer as a jointly optimized problem. Specifically, SCoTER operationalizes this through two synergistic components: a Generate-Validate-Mine (GVM) pipeline for automated pattern discovery and a structure-preserving integration architecture that transfers stepwise logic to efficient models. Empirically, experiments on four benchmarks demonstrate consistent improvements across diverse backbones. Moreover, in production deployment on the Tencent Advertising Platform, SCoTER achieved a 2.14\% lift in Gross Merchandise Value (GMV) while eliminating online LLM inference costs. Overall, SCoTER presents a practical and unified framework for integrating structured LLM reasoning into recommender systems, validated by consistent improvements in both offline benchmarks and online production environments.
\end{abstract}


\ccsdesc[500]{Information systems~Recommender systems}

\keywords{Recommender Systems, Large Language Models, Chain-of-Thought, Reasoning Transfer}


\maketitle
\section{Introduction}

Large Language Models (LLMs)~\cite{llmsurvey1, llmsurvey2} have demonstrated strong reasoning capabilities, especially with Chain-of-Thought (CoT) prompting~\cite{wei2022chain}. Recent studies~\cite{tsai2024leveraging, liu2025improving, bismay2025reasoningrec, zhao2025reason, you2025text} have therefore explored applying this reasoning power to recommender systems. Unlike conventional approaches that predict next items without explicit intermediate reasoning, such methods utilize step-by-step reasoning to infer the underlying rationale connecting historical behaviors to potential interests. This logical modeling capability not only improves recommendation accuracy but also provides more interpretable explanations. Consequently, integrating such structured reasoning has emerged as a promising direction for advancing modern recommender systems~\cite{wu2024survey, zhao2024recommender}.

\begin{figure}[t]
  \centering
  \includegraphics[width=\linewidth]{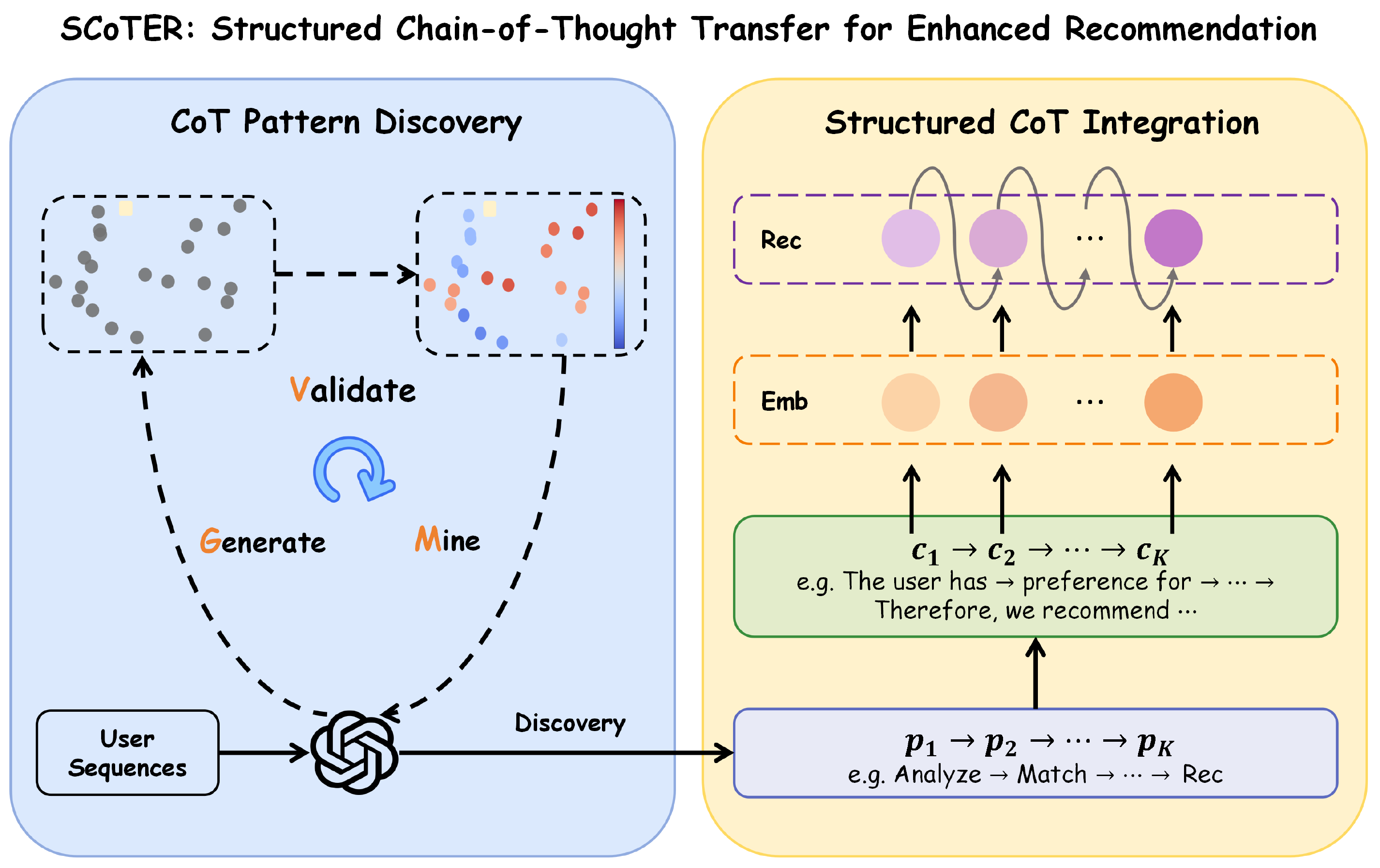}
  \caption{An overview of the SCoTER framework. It consists of two main components. First, the CoT Pattern Discovery pipeline addresses what to transfer by automatically discovering effective reasoning patterns from data. Second, the Structured CoT Integration architecture addresses how to transfer by integrating these patterns into an efficient model while preserving their step-wise structure.}
  \label{fig:intro-contrast}
\end{figure}

Despite this progress, existing approaches largely rely on manually crafted templates or heuristic rules to generate reasoning traces. In contrast to objective tasks with canonical reasoning structures, recommendation typically lacks ground-truth reasoning paths, as user intent is often latent, context-dependent, and highly diverse. As a result, static handcrafted designs can be brittle~\cite{wang2023self, zhangautomatic, wang2023plan}, failing to generalize across heterogeneous behaviors or to capture implicit interaction patterns. This exposes a fundamental “what to transfer” challenge: how to automatically discover effective, data-driven reasoning patterns that align with user intent, rather than relying on hand-engineered heuristics.

Existing work has explored incorporating LLM reasoning into recommender systems primarily in two ways. One line of methods directly prompts LLMs to generate step-by-step reasoning at inference time~\cite{bao2023tallrec, liao2024llara, zheng2024adapting, xia2025trackrec}, preserving the sequential logic of CoT but incurring prohibitive latency and computational costs for online deployment. Another line leverages offline feature extraction~\cite{wang2024can, cui2024distillation, xi2024towards, yang2025reccot}, using LLM-generated reasoning traces to help conventional recommenders for efficient inference. However, many such pipelines adopt a “structure-collapsing” strategy, compressing multi-step reasoning into a single fixed representation~\cite{wang2024can, xi2024towards, yang2025reccot, xia2025trackrec}. This compression weakens the stepwise dependencies that underlie the inference process, reducing rich reasoning chains to static signals. We refer to this as the “how to transfer” challenge. Furthermore, addressing what to transfer and how to transfer in isolation prevents joint optimization—reasoning patterns may be discovered without regard for integration costs, while transfer mechanisms may fail to preserve core stepwise logic.


To address these intertwined challenges, we introduce SCoTER (Structured Chain-of-Thought Transfer for Enhanced Recommendation), a unified framework designed to jointly optimize pattern discovery and structure-preserving integration (Figure~\ref{fig:intro-contrast}). To address the ‘what to transfer’ challenge, SCoTER features the Generate-Validate-Mine (GVM) pipeline. This pipeline transforms pattern discovery from a heuristic exercise into a data-driven optimization process: an LLM first generates a diverse set of candidate reasoning paths, which are then validated based on the empirical quality of their recommendations, before a final mining process distills the most effective and generalizable pattern. For the ‘how to transfer’ challenge, SCoTER employs a lightweight, structure-preserving architecture. This component integrates pre-computed offline reasoning embeddings via an order-aware fusion mechanism, which preserves the sequential structure of CoT while eliminating prohibitive online LLM inference costs. Extensive experiments demonstrate consistent improvements across diverse backbones on four public benchmarks, and a large-scale production deployment further yields a 2.14\% GMV lift.

Our main contributions are:
\begin{itemize}
  \item \textbf{Automated Discovery Pipeline:} We introduce the GVM pipeline to automatically discover data-driven reasoning patterns, reducing reliance on manual templates.
  
  \item \textbf{Structure-Preserving Integration:} We propose a lightweight architecture using pre-computed stepwise embeddings and order-aware fusion, eliminating online LLM inference while preserving sequential dependencies.
  
  \item \textbf{Comprehensive Validation:} Experiments demonstrate consistent improvements across four benchmarks and a 2.14\% GMV lift in production.
\end{itemize}
\section{Related works}

\textbf{LLM Reasoning for Recommendation.} Recent approaches integrating LLM into recommendation systems have explored a variety of complex reasoning structures. CoT-Rec \cite{liu2025improving} employs two-stage prompting for user preference analysis, GOT4Rec \cite{long2024got4rec} uses Graph-of-Thought frameworks, and ThinkRec \cite{yu2025thinkrec} shifts to System 2 thinking through reasoning data synthesis, while RecGPT \cite{yi2025recgpt} works to unify multi-step reasoning frameworks. Complementing these efforts, a parallel line of research focuses on refining or distilling reasoning capabilities. This includes the inference-time autoregressive refinement in ReaRec \cite{tang2025think}, the distillation of step-by-step rationales to smaller models by RDRec \cite{rdrec}, and the iterative feedback framework used by TrackRec \cite{xia2025trackrec}. However, these methods are limited by relying on heuristic reasoning paths instead of mining user sequences, and their failure to jointly optimize pattern discovery and integration.

\textbf{Automated Reasoning Discovery}: Automated discovery of reasoning patterns has emerged as an alternative to manual template design. Auto-CoT \cite{zhangautomatic} automatically constructs demonstrations by sampling diverse questions and generating rationales, while Self-prompted CoT \cite{wang2023self} enables LLMs to self-induce reasoning steps. Self-Consistency \cite{wangself} improves reasoning by sampling multiple paths, and broader approaches include APE \cite{zhou2022large} for automatic prompt engineering, PromptBreeder \cite{fernando2024promptbreeder} for evolutionary optimization, and Self-discover \cite{zhou2024self} for composing atomic reasoning modules.  However, these methods are primarily designed for objective tasks with verifiable ground truth. They are less effective in the recommendation domain, which is subjective and has sparse rewards that make it difficult to improve reasoning paths. Our approach, in contrast, addresses this by sampling from broad user behaviors and performing in-depth analysis to use the Recall metric as a dense reward signal.

\begin{figure*}[t]
    \includegraphics[width=1\linewidth]{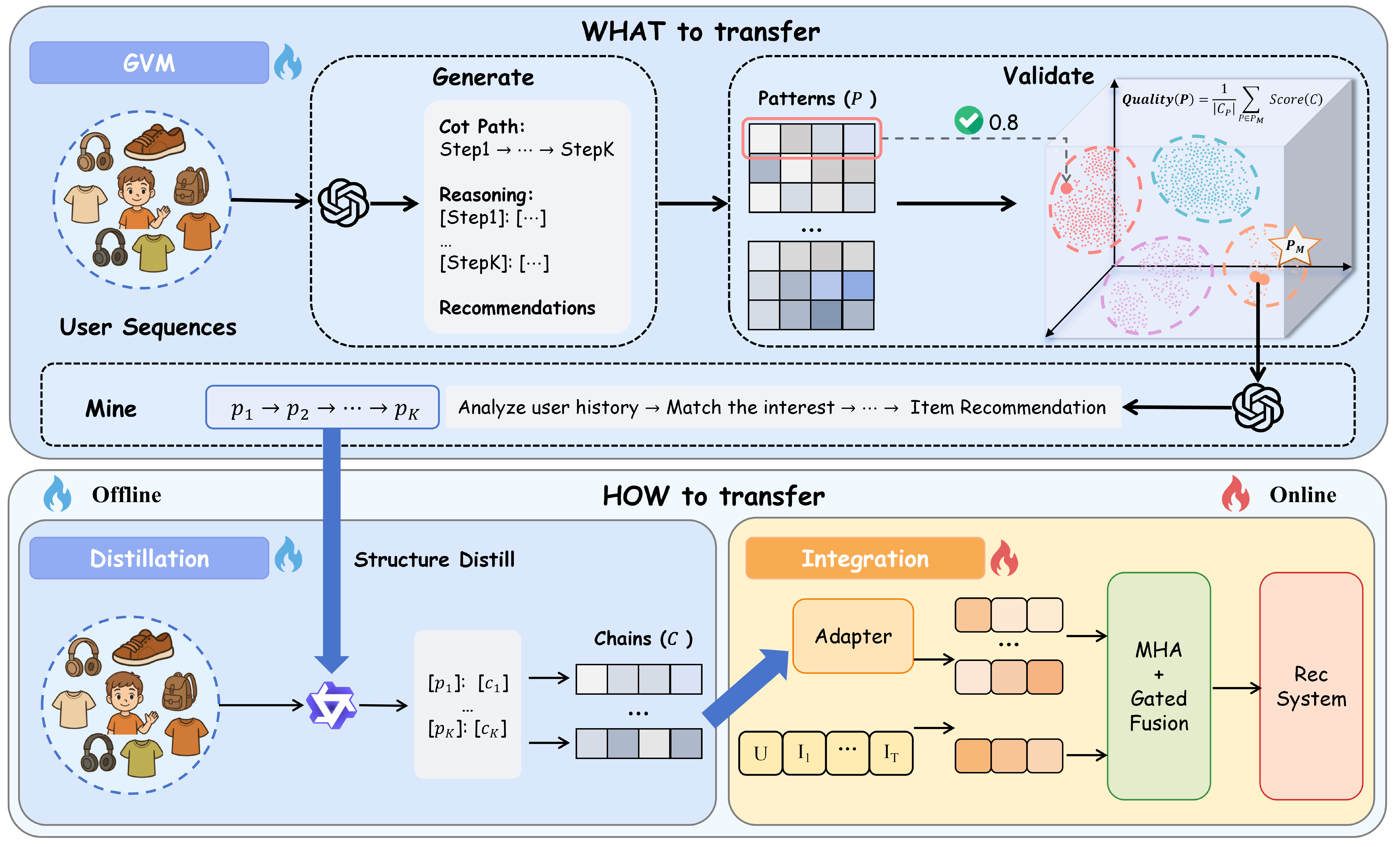}
    \caption{The proposed Structured Chain-of-Thought Transfer for Enhanced Recommendation (SCoTER) framework. It jointly solves the challenges of what to transfer and how to transfer CoT reasoning. To determine what to transfer, an offline GVM (Generate-Validate-Mine) pipeline automates the discovery of optimal reasoning patterns from data. To address how to transfer these patterns, they are first materialized as step-wise embeddings via Structured Distillation and subsequently fused into a backbone model by a lightweight Online Integration module that preserves the chain's logical structure.}
    \label{fig:framework}
\end{figure*}

\section{Method}\label{sec:method}

In this section, we propose SCoTER, a unified framework designed to jointly address the dual challenges of \textit{what to transfer} and \textit{how to transfer}. As shown in Figure~\ref{fig:framework}, the framework consists of two main components: the Generate-Validate-Mine (GVM) pipeline for automated pattern discovery, and a structure-preserving architecture for efficient integration. The full pipeline is formalized in Appendix~\ref{app:pseudocode} (Algorithm~\ref{alg:gvm}--\ref{alg:scoter}). We begin by introducing the problem definition and preliminaries in Section~\ref{subsec:preliminaries}, followed by the details of the two components. 

\subsection{Preliminaries}
\label{subsec:preliminaries}

We begin by formally defining the sequential recommendation task and the hierarchical reasoning structures.

\noindent\textbf{Sequential Recommendation.} Let $\mathcal{U}$ and $\mathcal{I}$ denote the sets of users and items, respectively. For each user $u \in \mathcal{U}$, the historical interaction sequence is defined as $S = [i_1, i_2, \ldots, i_T]$, where $i_t \in \mathcal{I}$ represents the item interacted with at time step $t$. The goal of sequential recommendation is to predict the next item $i_{T+1}$ that the user is most likely to interact with. Mathematically, the recommender aims to learn a function with parameters $\Theta$ that maximizes the prediction probability $p(i_{T+1} | S; \Theta)$.

\noindent\textbf{Reasoning Pattern ($P$).} 
A \textit{Reasoning Pattern} is an abstract template of $K$ logical steps that guides the reasoning process. Formally, $P = (p_1, p_2, \ldots, p_K)$, where each $p_k$ represents a high-level reasoning instruction (e.g., ``Analyze user history'' $\rightarrow$ ``Match the interest'' $\rightarrow$ $\cdots$ $\rightarrow$ ``Item Recommendation'').

\noindent\textbf{Reasoning Chain ($C$).} This is the specific textual instantiation of $P$ generated by an LLM for a sequence $S$. For example, given history $S = [\text{PS5 Console}, \text{FIFA 24}]$, the chain unfolds as:
\begin{itemize}
    \item \textbf{Analyze user history ($c_1$):} ``$\cdots$ User has a new console and likes sports games.$\cdots$ ''
    \item \textbf{Match the interest ($c_2$):} ``$\cdots$ Likely needs local multiplayer accessories. $\cdots$'' 
    \item \textbf{Item Recommendation ($c_K$):} ``$\cdots$ Recommend DualSense Controller.''
\end{itemize}

\noindent\textbf{Vectorization.} To incorporate this textual reasoning into the recommendation model, we employ a pre-trained sentence encoder to map each discrete step $c_k$ into a dense vector $\mathbf{e}_k \in \mathbb{R}^D$. The sequence of vectors is then stacked to form the structured reasoning matrix $\mathbf{H} = [\mathbf{e}_1; \mathbf{e}_2; \dots; \mathbf{e}_K]$, which serves as the reasoning-enhanced input for our fusion module.

\subsection{What to Transfer: Automated Discovery of Reasoning patterns}\label{subsec:off1}

Our approach replaces manual template design with a three-phase optimization pipeline: GVM. This process systematically mines the optimal pattern from a diverse set of candidate reasoning chains and extracts it as a symbolic template for the subsequent transfer.

\textbf{Generate:}
The Generate phase produces a diverse set of candidate reasoning chains for each user sequence $S$. We employ an LLM (e.g., DeepSeek-R1~\cite{guo2025deepseek}) with a structured prompt (Figure~\ref{fig:prompt_structure}) that instructs the model to act as a recommendation expert. The prompt requires three distinct outputs: (1) a concise, step-wise reasoning chain in \texttt{<cot\_path>} tags; (2) a detailed elaboration in a \texttt{<reason>} block; and (3) a list of 20 ranked recommendations in \texttt{<recommendations>} tags. This explicit separation decouples the abstract reasoning logic from detailed explanations.

To ensure diversity, we employ two mechanisms. First, during generation, we use temperature $\tau_{gen}=0.7$ and top-$p$ nucleus sampling with $p=0.9$ to encourage varied reasoning styles. Second, post-generation, we prune near-duplicate paths using a cosine similarity threshold $\gamma=0.85$. This filtering step preserves semantic diversity and mitigates the over-representation of similar reasoning chains.

\textbf{Validate:}
The Validate phase provides a quantitative score for each generated reasoning chain based on its recommendation quality, which serves as the empirical basis for subsequent mining.

Since the LLM outputs textual descriptions rather than Item IDs, we first map the generated content to the item catalog via semantic retrieval. For instance, to map a generated description like ``wireless controller'' to the actual item ``Sony DualSense,'' we encode both the description and titles using Qwen3-8B-Embedding~\cite{zhang2025qwen3}. We retrieve the top-1 item based on cosine similarity to form the unique predicted set $\widehat{Y}_{20}(C)$.

We evaluate the chain using Recall@20. The performance of a chain $C$ is calculated as:
$\operatorname{Recall@20}(C)
=\mathbb{I}[i_{T+1}\in \widehat{Y}_{20}(C)].$
To assess generalized quality, we define Score(C) as the expected Recall@20 across the user distribution. This score measures how consistently a chain produces high-quality recommendations.
\[
\operatorname{Score}(C)
\;=\;
\mathbb{E}_S\!\big[\operatorname{Recall@20}(C)\big].
\]

This score provides a direct empirical estimate of a chain's predictive value, prioritizing reasoning paths that actually lead to accurate recommendation outcomes.

\textbf{Mine:}
The Mine phase abstracts a single, optimal reasoning pattern from the candidate reasoning chains.

The process begins by extracting the \textit{logic flow description} (i.e., the text in the \texttt{<cot\_path>} tag) from each generated response. These descriptions serve as candidate representations of the Reasoning Pattern $P$ defined in Section~\ref{subsec:preliminaries}. We transform these path descriptions into a dense embedding space using a pre-trained sentence encoder (e.g., Qwen3-Embedding~\cite{zhang2025qwen3}). Within this space, we perform K-Means clustering to group semantically similar logic flows, where the optimal number of clusters is determined via the Silhouette Coefficient.

Each cluster represents a candidate pattern $P$, and let $\mathcal{C}_P$ denote its associated set of reasoning chains (i.e., the detailed instantiations in the \texttt{<reason>} tag). We define the \textit{Quality} of a pattern as the average effectiveness of its chains:
\[\text{Quality}(P) = \frac{1}{|\mathcal{C}_P|} \sum\limits_{C \in \mathcal{C}_P} \mathrm{Score}(C).\]
We select the optimal pattern $P^*$ maximizing Quality.
 
Upon selecting $P^*$, we extract it as a symbolic, generalizable template. This transformation both provides a human-interpretable artifact for qualitative analysis and serves as a robust instruction set for the subsequent Structured Distillation phase. We achieve this abstraction through a two-stage, LLM-driven synthesis process.

First, we select the top-$N$ (e.g., $N=10$) chains with the highest cosine similarity to the pattern's semantic centroid. These exemplars serve as a reliable basis for abstraction. Second, we compile these exemplars into a meta-prompt that directs a powerful LLM to synthesize the shared logical structure. The process culminates in an Optimal CoT Template that captures the core reasoning logic of the discovered pattern. An instantiation of the Mine prompt and the resulting synthesized template is shown in Fig.~\ref{fig:mining_prompt}.

\subsection{How to Transfer: Structure-Preserving Integration}
To transfer the discovered pattern $P^*$ without structural loss, we employ a two-stage process. First, offline Structured Distillation (Section~\ref{subsubsec:str_dis}) materializes the pattern into step-wise reasoning matrices  $\{\mathbf{H}_i\}$. Second, online Order-Preserving Fusion (Section~\ref{subsubsec:order_fusion}) integrates the retrieved $\mathbf{H}_i$ with the backbone model while preserving sequential dependencies.

\subsubsection{Structured Distillation}\label{subsubsec:str_dis}
This stage aims to preserve the step-wise logical flow of the discovered pattern without incurring the high latency of online LLM inference. We achieve this through a structured teacher-student distillation framework.

We leverage the optimal template to guide a powerful teacher LLM (e.g., DeepSeek-R1) in generating structured reasoning chains. For each user sequence $S$ in our training corpus, the teacher model produces template-consistent reasoning $C=(c_1, c_2, \ldots, c_K)$, creating training pairs $\{(S_i, C_i)\}_{i=1}^N$ where the student learns to generate structured reasoning given user sequences as input. An example synthesized instance is provided in Fig.~\ref{fig:full_synthesis_example}.

A smaller, more efficient student model (Qwen3-8B~\cite{yang2025qwen3}) is fine-tuned on this synthetic dataset. This step is critical for scalability, as it allows us to generate reasoning chains for the entire user corpus at a fraction of the computational cost of the teacher model.

We apply the distilled student model to generate reasoning chains for all data splits. For each sequence $S_i$, we feed it through the fine-tuned student model to produce a corresponding reasoning chain $C_i=(c_{i,1}, c_{i,2}, \ldots, c_{i,K})$.

For each generated reasoning step $c_{i,j}$, we extract a dense embedding using a pre-trained sentence encoder (e.g., Qwen3-8B-Embedding~\cite{zhang2025qwen3}). This process transforms textual reasoning steps into fixed-dimensional embeddings, where $\mathbf{e}_{i,j} \in \mathbb{R}^D$ represents the embedding for the $j$-th reasoning step of sequence $S_i$:
\[
\mathbf{e}_{i,j} = \text{SentenceEncoder}(c_{i,j}), \quad j = 1, 2, \ldots, K.
\]
The step-wise embeddings for each sequence are then assembled into a structured representation matrix $\mathbf{H}_i \in \mathbb{R}^{K \times D}$, which preserves the sequential structure of the reasoning steps:
\[
\mathbf{H}_i = [\mathbf{e}_{i,1}; \mathbf{e}_{i,2}; \ldots; \mathbf{e}_{i,K}].
\]

All structured embedding matrices $\{\mathbf{H}_i\}_{i=1}^N$ are computed and stored offline and retrieved at inference time. This decoupling strategy eliminates the latency bottleneck of real-time generation, as the complex reasoning is reduced to a fast retrieval operation.

\subsubsection{Order-Preserving Fusion}\label{subsubsec:order_fusion}
This stage integrates pre-computed step-wise matrices $\mathbf{H}_i$ with backbone recommendation models using a lightweight, model-agnostic fusion architecture. This online component prioritizes serving efficiency while preserving the sequential structure critical for reasoning effectiveness.

For each user sequence, we retrieve its corresponding reasoning matrix, $\mathbf{H}_i \in \mathbb{R}^{K \times D}$, from the offline repository during inference. An adapter module then projects these reasoning embeddings into the target model's representation space:
\begin{equation}
\mathbf{z}_{i,j} = \text{LayerNorm}(\mathbf{W}_{\text{proj}} \mathbf{e}_{i,j} + \mathbf{b}_{\text{proj}})
\end{equation}
where $\mathbf{e}_{i,j} \in \mathbb{R}^D$ is the $j$-th step embedding from $\mathbf{H}_i$, $\mathbf{W}_{\text{proj}} \in \mathbb{R}^{d_{\text{item}} \times D}$ projects to the backbone's item embedding dimension, and $\mathbf{z}_{i,j} \in \mathbb{R}^{d_{\text{item}}}$ is the adapted representation.

To strictly enforce the logical order defined by the pattern (e.g., ensuring "Analysis" precedes "Recommendation"), we augment each projected embedding with learnable positional encodings:
\begin{equation}
\mathbf{z}_{i,j}^{\text{pos}} = \mathbf{z}_{i,j} + \mathbf{P}_j
\end{equation}
where $\mathbf{P}_j \in \mathbb{R}^{d_{\text{item}}}$ are position embeddings that encode each step's role within the reasoning sequence.

We employ cross-attention to allow each sequence position to selectively attend to relevant reasoning steps. Let $\mathbf{h}_{\text{seq}}^{(i)} = [\mathbf{h}_1^{(i)}, \mathbf{h}_2^{(i)},$ \ldots,  $\mathbf{h}_T^{(i)}]\in \mathbb{R}^{T \times d_{\text{item}}}$ denote the backbone model's embeddings for the user sequence, and $\mathbf{Z}_i^{\text{pos}} = [\mathbf{z}_{i,1}^{\text{pos}}; \ldots ; \mathbf{z}_{i,K}^{\text{pos}}] \in \mathbb{R}^{K \times d_{\text{item}}}$ represent the projected CoT embeddings with positional encoding. In this cross-attention, the sequence embeddings serve as queries, while the reasoning steps act as both keys and values. The cross-attention mechanism computes attended reasoning representations for each sequence position:
\begin{equation}
\mathbf{A}_i = \text{Attention}(\mathbf{h}_{\text{seq}}, \mathbf{Z}_i^{\text{pos}}, \mathbf{Z}_i^{\text{pos}})
\end{equation}

The attention output is then integrated with the original sequence using adaptive gating:
\begin{align}
\mathbf{g}_i &= \sigma(\mathbf{W}_g[\mathbf{h}_{\text{seq}}; \mathbf{A}_i] + \mathbf{b}_g) \\
\mathbf{E}_{\text{fused}}^{(i)} &= \text{LayerNorm}(\mathbf{g}_i \odot \mathbf{h}_{\text{seq}}^{(i)} + (1 - \mathbf{g}_i) \odot \mathbf{A}_i)
\end{align}
Here, $[\mathbf{h}_{\text{seq}}^{(i)}; \mathbf{A}_i]$ represents concatenation along the feature dimension, and the final layer normalization is applied to the gated output.

To align the reasoning space with the recommendation objective, we employ a contrastive learning objective. Crucially, we calculate the InfoNCE loss~\cite{oord2018representation} specifically between the final reasoning step $\mathbf{z}_{i,K}^{pos}$ (which corresponds to "Item Recommendation" in our pattern) and the target item embedding $\mathbf{v}_{\text{target}}$:
\begin{equation}
\mathcal{L}_{\text{InfoNCE}} = -\log \frac{\exp(\text{sim}(\mathbf{z}_{i, K}^{pos}, \mathbf{v}_{\text{target}}) / \tau)}{\sum_{j=1}^{B} \exp(\text{sim}(\mathbf{z}_{i,K}^{pos}, \mathbf{v}_j) / \tau)}
\end{equation}
The term $\text{sim}(\cdot, \cdot)$ represents cosine similarity, $\tau$ is the temperature parameter, and $B$ is the batch size, with $\{\mathbf{v}_j\}_{j=1}^B$ including the target item and negative samples from other batch items.

Simultaneously, we optimize the primary recommendation task. Let $p(i_{T+1} | \mathbf{E}_{\text{fused})}^{(i)}$ denote the predicted probability of the ground-truth next item $i_{T+1}$ given the fused sequence representation. The general recommendation loss is defined as:
\begin{equation} 
\mathcal{L}_{\text{rec}} = -\log p(i_{T+1} | \mathbf{E}_{\text{fused}}^{(i)})
\end{equation}

The full training objective combines the recommendation loss with the contrastive alignment loss:
\begin{equation}
\mathcal{L}_{\text{total}} = \mathcal{L}_{\text{rec}} + \lambda \mathcal{L}_{\text{InfoNCE}}
\end{equation}
where $\lambda$ controls the contrastive term.

This structured integration architecture preserves the step-wise nature of CoT reasoning, allowing downstream models to leverage both the progressive reasoning flow and the final recommendation-oriented representations to improve prediction accuracy.

\begin{table*}[t]
  \caption{Overall performance comparison between baselines and our proposed SCoTER. Specifically, SCoTER is applied on top of TIGER and HSTU (denoted as SCoTER-T and SCoTER-H), where ``RI'' indicates the relative improvement compared to the respective backbones. The best and second-best results are highlighted in \textbf{bold} and \underline{underlined}, respectively.}
  \label{tab:overall_performance}
  \centering
  \small 
  \setlength{\tabcolsep}{4pt} 
  \renewcommand{\arraystretch}{1.2} 
  \begin{tabular}{@{}llccccc|ccc|ccc@{}}
    \toprule
    & & \multicolumn{5}{c|}{Traditional Baselines} & \multicolumn{3}{c|}{TIGER Backbone} & \multicolumn{3}{c}{HSTU Backbone} \\
    \cmidrule(lr){3-7}\cmidrule(lr){8-10}\cmidrule(l){11-13}
    Dataset & Metric & LightGCN & Caser & HGN & Bert4Rec & SASRec & TIGER & \textbf{SCoTER-T} & \textbf{RI} & HSTU & \textbf{SCoTER-H} & \textbf{RI} \\
    \midrule
    \multirow{4}{*}{\textbf{Beauty}}
      & R@5   & 0.0228 & 0.0279 & 0.0344 & 0.0203 & 0.0387 & 0.0392 & 0.0434 & 10.71\% & \underline{0.0567} & \textbf{0.0737} & 29.98\% \\
      & R@10  & 0.0421 & 0.0456 & 0.0564 & 0.0347 & 0.0605 & 0.0594 & 0.0656 & 10.44\% & \underline{0.0930} & \textbf{0.1187} & 27.63\% \\
      & N@5   & 0.0136 & 0.0172 & 0.0214 & 0.0124 & 0.0249 & 0.0257 & 0.0276 & 7.39\% & \underline{0.0376} & \textbf{0.0472} & 25.53\% \\
      & N@10  & 0.0198 & 0.0229 & 0.0284 & 0.0137 & 0.0318 & 0.0321 & 0.0347 & 8.10\% & \underline{0.0502} & \textbf{0.0621} & 23.71\% \\
    \midrule
    \multirow{4}{*}{\textbf{Instrument}}
      & R@5   & 0.0757 & 0.0770 & 0.0813 & 0.0671 & 0.0857 & 0.0865 & 0.0908 & 4.97\% & \underline{0.1009} & \textbf{0.1111} & 10.11\% \\
      & R@10  & 0.1010 & 0.0995 & 0.1048 & 0.0822 & 0.1083 & 0.1062 & 0.1110 & 4.52\% & \underline{0.1394} & \textbf{0.1573} & 12.84\% \\
      & N@5   & 0.0472 & 0.0639 & 0.0668 & 0.0560 & 0.0715 & \underline{0.0736} & \textbf{0.0765} & 3.94\% & 0.0635 & 0.0669 & 5.35\% \\
      & N@10  & 0.0554 & 0.0711 & 0.0774 & 0.0608 & 0.0788 & 0.0799 & \textbf{0.0829} & 3.75\% & 0.0769 & \underline{0.0823} & 7.02\% \\
    \midrule
    \multirow{4}{*}{\textbf{Sports}}
      & R@5   & 0.0098 & 0.0116 & 0.0189 & 0.0115 & 0.0233 & 0.0233 & 0.0260 & 11.59\% & \underline{0.0418} & \textbf{0.0584} & 39.71\% \\
      & R@10  & 0.0184 & 0.0194 & 0.0313 & 0.0191 & 0.0350 & 0.0379 & 0.0406 & 7.12\% & \underline{0.0674} & \textbf{0.0889} & 31.90\% \\
      & N@5   & 0.0061 & 0.0072 & 0.0120 & 0.0075 & 0.0154 & 0.0150 & 0.0161 & 7.33\% & \underline{0.0274} & \textbf{0.0387} & 41.24\% \\
      & N@10  & 0.0087 & 0.0097 & 0.0159 & 0.0099 & 0.0192 & 0.0197 & 0.0209 & 6.09\% & \underline{0.0361} & \textbf{0.0487} & 34.90\% \\
    \midrule
    \multirow{4}{*}{\textbf{Yelp}}
      & R@5   & 0.0248 & 0.0150 & 0.0186 & 0.0186 & 0.0183 & 0.0241 & 0.0258 & 7.05\% & \underline{0.0257} & \textbf{0.0356} & 38.52\% \\
      & R@10  & 0.0403 & 0.0263 & 0.0326 & 0.0291 & 0.0296 & 0.0385 & 0.0406 & 5.45\% & \underline{0.0438} & \textbf{0.0567} & 29.45\% \\
      & N@5   & 0.0156 & 0.0099 & 0.0115 & 0.0115 & 0.0116 & 0.0158 & 0.0174 & 10.13\% & \underline{0.0177} & \textbf{0.0242} & 36.72\% \\
      & N@10  & 0.0207 & 0.0134 & 0.0159 & 0.0159 & 0.0152 & 0.0204 & 0.0222 & 8.82\% & \underline{0.0245} & \textbf{0.0315} & 28.57\% \\
    \bottomrule
  \end{tabular}
\end{table*}

\section{Experiments}
In this section, we conduct a series of experiments to answer the following research questions:

\textbf{RQ1 (Overall Performance):} How does SCoTER perform compared to existing recommendation models?

\textbf{RQ2 (Pattern Discovery):} How effective is our automated GVM pipeline compared to manual CoT templates? Are the discovered patterns 
dataset-specific?

\textbf{RQ3 (Structure Preservation):} How do the structure-preserving components contribute to performance?

\textbf{RQ4 (Robustness \& Sensitivity):} Is SCoTER robust across different user/item groups and hyperparameter settings?

\subsection{Experimental Setup}

\textbf{Datasets.} We conduct experiments on four widely used datasets: three subsets of the Amazon Product Reviews dataset \cite{he2016ups, ni2019justifying} (Beauty, Instruments, and Sports) and the Yelp dataset. Table~\ref{tab:dataset} summarizes the statistics of these datasets. Following previous work \cite{zhou2020s3}, we process the data to enforce a 5-core density, removing all users and items with fewer than five interactions. All user sequences are then normalized to a uniform length of 20 through padding or truncation, preserving their most recent interactions. For evaluation, we use the leave-one-out protocol: each user's final interaction is designated for testing, the penultimate one for validation, and the remaining interactions are used for training

\textbf{Baselines.} We compare our proposed method, Structured Chain-of-Thought Recommendation (SCoTER), against a comprehensive suite of representative baselines that span different paradigms:

\begin{itemize}
    \item \textbf{LightGCN}~\cite{he2020lightgcn}: A graph convolutional network that captures collaborative signals via neighborhood aggregation.
    \item \textbf{Caser}~\cite{tang2018personalized}: A sequential model that employs convolutional neural networks to capture local sequential patterns.
    \item \textbf{HGN}~\cite{ma2019hierarchical}: A sequential model that utilizes a hierarchical gating network to adaptively integrate a user's long- and short-term preferences.
    \item \textbf{SASRec}~\cite{kang2018self}: A sequential model that uses a self-attention mechanism to capture long-range dependencies and dynamic user preferences.
    \item \textbf{Bert4Rec}~\cite{sun2019bert4rec}: A sequential model that uses a deep bidirectional self-attention mechanism to model user sequences.
    \item \textbf{TIGER}~\cite{rajput2023recommender}: A generative model that represents items as discrete token sequences, enabling recommendation through autoregressive decoding. We select TIGER as the backbone for our method due to its strong generative performance and architectural compatibility with reasoning integration.
    \item \textbf{HSTU}~\cite{zhai2024actions}: A sequential model utilizing hierarchical sequential transduction units. We augment the original HSTU with semantic ID inputs and a dedicated SID decoder to enable generative retrieval (details in Appendix~\ref{app:backbones}). This enhanced version serves as a stronger baseline and backbone for SCoTER-H.
    \item \textbf{SCoTER:} Our proposed framework. To demonstrate its universality, we implement two variants with different backbones: SCoTER-T (built upon TIGER) and SCoTER-H (built upon HSTU). Both variants enhance their respective backbones by integrating structured Chain-of-Thought reasoning, as detailed in Section~\ref{sec:method}.
\end{itemize}

\textbf{Evaluation Protocol.} Performance is evaluated using two standard top-K ranking metrics: Recall@K and NDCG@K.  Following common practice, we report the main results for $K \in \{5, 10\}$. To ensure a fair evaluation and avoid sampling bias, we perform a full ranking over the entire item catalog for each user.

\textbf{Implementation Details.} For traditional methods, we follow standard implementations with hyperparameters tuned on validation sets. For generative backbones, the tokenizer employs RQ-VAE~\cite{lee2022autoregressive, zeghidour2021soundstream} for discrete semantic encoding with 4 codebooks, each containing 256 embeddings of dimension 32. Semantic inputs to RQ-VAE are derived from the embeddings of item titles and descriptions processed by Qwen3-8B-Embedding~\cite{zhang2025qwen3}. During inference, we use a beam size of 20 to balance recommendation quality and efficiency. For SCoTER, we integrate the pre-computed offline reasoning embeddings $H_i$ via multi-head cross-attention. We adopt distinct configurations to optimize each variant: SCoTER-T utilizes 6 cross-attention heads and is trained for 200 epochs. In contrast, SCoTER-H employs 4 cross-attention heads and is trained for 300 epochs to ensure convergence. Training uses the Adam optimizer with learning rate $1 \times 10^{-4}$, weight decay $5 \times 10^{-5}$, and a batch size of 512. The contrastive learning weight $\lambda$ is set to 0.1 for both variants. Additional hyperparameter details are provided in the Appendix~\ref{app:train_details}.

\subsection{Overall Performance (RQ1)}
To assess the effectiveness of our reasoning transfer framework, we compare SCoTER against strong baselines. The comprehensive results are presented in Table~\ref{tab:overall_performance}.

\textbf{Consistent Improvements.} SCoTER delivers consistent gains across all datasets and metrics. SCoTER-T achieves 3.75\%--11.59\% relative improvement over TIGER, while SCoTER-H yields larger gains of 5.35\%--41.24\% over the stronger HSTU backbone, achieving state-of-the-art results on most metrics.

\textbf{Dataset-wise Analysis.} The most substantial improvements appear on Beauty and Sports datasets, where SCoTER-H boosts Recall@5 by 29.98\% and 39.71\%, respectively. On Instrument, HSTU's NDCG metrics are lower than TIGER's, possibly due to the smaller item vocabulary favoring TIGER's semantic tokenization. Nonetheless, SCoTER consistently improves both backbones despite these intrinsic differences. The gains are generally more pronounced for top-5 than top-10 metrics, suggesting that structured reasoning particularly benefits precision-critical scenarios.

\textbf{Backbone-Agnostic Design.} The consistent improvements observed across both TIGER  and HSTU  validate that SCoTER is backbone-agnostic, injecting effective reasoning capabilities orthogonal to the underlying base architecture.

\subsection{Automated Pattern Discovery (RQ2)}

To evaluate our automated GVM pipeline, we compare the discovered pattern against manual CoT templates (Two-step, Three-step, Five-step) representing increasing reasoning depth. Specifically, the Two-step template follows Rec-GPT~\cite{yi2025recgpt}. The discovered templates for each dataset are provided in Appendix~\ref{sec:appendix_prompts}.

\textbf{GVM vs. Manual Templates.} As shown in Figure~\ref{fig:perform_of_different_cot}, our discovered pattern consistently outperforms all manual templates. On TIGER, SCoTER achieves 10.44\% gain in Recall@10, surpassing Five-step (7.41\%) by a clear margin. On HSTU, SCoTER delivers 27.63\% gain versus 21.72\% for Five-step. The results confirm that our data-driven approach identifies non-obvious logic pathways aligned with the data distribution, breaking through the ceiling of manual template design.

\textbf{Clustering vs. Top-K Selection.} We also compare our clustering-based pattern mining against a Top-K selection baseline ($K=10$) that directly selects high-scoring individual chains without clustering. As shown in Table~\ref{tab:pattern_analysis}, the Top-K selection leads to noticeable performance degradation. This stems from the fact that individual chain selection lacks generalizability---patterns derived from single instances fail to capture the shared reasoning structure across diverse user sequences.

\textbf{Cross-Dataset Pattern Transfer.} To investigate whether discovered patterns are dataset-specific, we transfer patterns across datasets. As shown in Table~\ref{tab:cross_dataset}, cross-dataset transfer generally degrades performance. Notably, while the general Beauty pattern transfers reasonably well to Sports, the Sports pattern performs poorly on Beauty because it is highly domain-specific (e.g., emphasizing athletic attributes). This asymmetry confirms that effective reasoning patterns are domain-dependent, validating per-dataset discovery via GVM.

\begin{table}[t]
\caption{Pattern discovery analysis on HSTU backbone.}
  \label{tab:pattern_analysis}
  \centering
  \small
  \begin{tabular}{llc}
    \toprule
    Variant (on Beauty) & R@10 &N@10  \\
    \midrule
    Clustering (SCoTER) & 0.1187 & 0.0621\\
    Top-K Selection & 0.1124 & 0.0593 \\
    \bottomrule
  \end{tabular}
\end{table}

\begin{table}[t]
  \caption{Cross-dataset pattern transfer on HSTU backbone.}
  \label{tab:cross_dataset}
  \centering
  \small
  \begin{tabular}{llcc}
    \toprule
    Pattern Source & Target & R@10 &  N@10 \\
    \midrule
    Beauty & Beauty & 0.1187 (-)  & 0.0621 (-)\\
    Sports & Beauty & 0.1057 ($\downarrow 10.95\%$) & 0.0555 ($\downarrow 10.63\%$) \\
    \midrule
    Sports & Sports & 0.0889 (-) & 0.0487 (-)  \\
    Beauty & Sports & 0.0860 ($\downarrow 3.26\%$) & 0.0469 ($\downarrow 3.70\%$) \\
    \bottomrule
  \end{tabular}
\end{table}

\begin{figure}[t]
\centering
\includegraphics[width=1.0\linewidth]{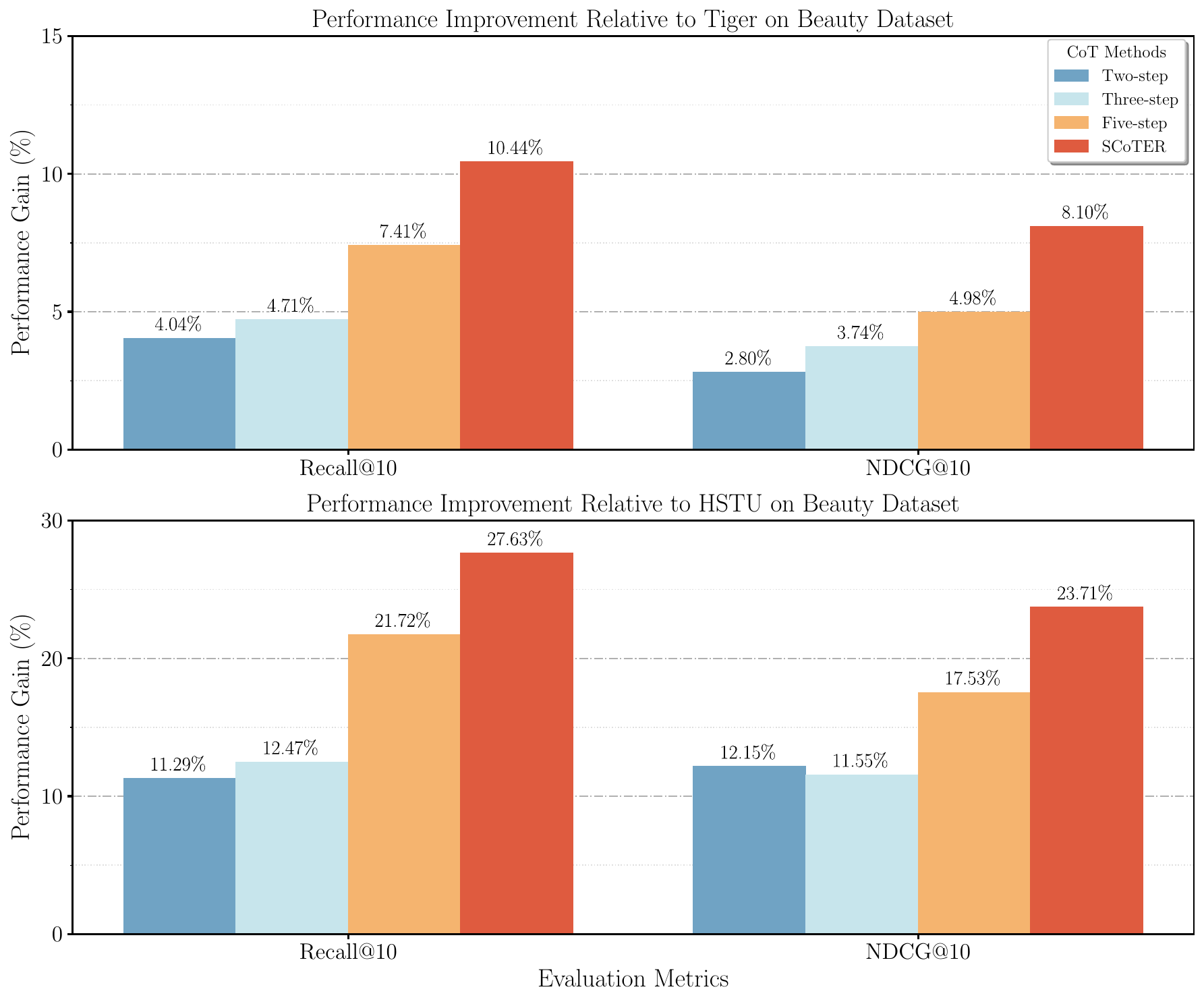}

\caption{Performance improvement relative to TIGER (top) and HSTU (bottom) backbones on the Beauty dataset. We compare manual CoT templates (Two-step, Three-step, Five-step) against our automatically discovered pattern (SCoTER).}
\label{fig:perform_of_different_cot}
\end{figure}

\begin{table}[t]
  \caption{Ablation study of SCoTER-H on Beauty and Sports datasets. We report R@10 and N@10.}
  \label{tab:ablation_study_on_hstu}
  \centering
  \small
  \setlength{\tabcolsep}{3.5pt} 
  \renewcommand{\arraystretch}{1.2}
  \begin{tabular}{lcccc}
    \toprule
    \multirow{2}{*}{Variant} & \multicolumn{2}{c}{Beauty} & \multicolumn{2}{c}{Sports} \\
    \cmidrule(lr){2-3} \cmidrule(lr){4-5}
    & R@10 & N@10 & R@10 & N@10 \\
    \midrule
    \textbf{SCoTER-H (Full)} & \textbf{0.1187} & \textbf{0.0621} & \textbf{0.0889} & \textbf{0.0487} \\
    \midrule
    w/o Position & 0.1143 & 0.0605 & 0.0882 & 0.0479 \\
    w/o Contrastive & 0.1091 & 0.0590 & 0.0842 & 0.0454 \\
    w/o Step-wise Embedding & 0.1139 & 0.0607 & 0.0864 & 0.0468 \\
    w/o Step-wise (Mean) & 0.1105 & 0.0597 & 0.0856 & 0.0462 \\
    w/o Step-wise (Last) & 0.1030 & 0.0554 & 0.0827 & 0.0454 \\
    w/o Student Distill &  0.1008 & 0.0537 & 0.0757 & 0.0407\\
    \midrule
    HSTU (Backbone) & 0.0930 & 0.0502 & 0.0674 & 0.0361 \\
    \bottomrule
  \end{tabular}
  \vspace{-0.5pt}
\end{table}

\subsection{Effectiveness of Structure Preservation (RQ3)}
To validate each component's contribution, we conduct ablation studies on Beauty and Sports using SCoTER-H. Results are shown in Table~\ref{tab:ablation_study_on_hstu}. We also find that ablations on TIGER show consistent trends (see Appendix~\ref{append:inter_ab_on_tiger}).

\textbf{Step-wise Fusion is Critical.} Replacing step-wise attention 
with compressed representations causes significant degradation. Using only the last token (w/o Step-wise Last) leads to 13.23\% drop in Recall@10 on Beauty, while mean pooling (w/o Step-wise Mean) results in 6.91\% drop, and encoding the entire CoT as a single vector (w/o Step-wise Embedding) causes 4.04\% drop. This confirms that intermediate reasoning steps contain valuable signals; discarding the logical progression reduces fine-grained matching ability.

\textbf{Alignment and Position Matter.} Removing contrastive alignment (w/o Contrastive) causes 8.09\% drop in Recall@10 on Beauty, confirming that reasoning must be explicitly aligned with user preferences. Removing positional encoding (w/o Position) degrades Beauty by 3.71\%, indicating that step order is important for logical deduction. Interestingly, the position impact on Sports is marginal, suggesting that for certain domains, semantic content dominates over strict order.

\textbf{Student Distillation is Essential.} Using the original Qwen3-8B~\cite{yang2025qwen3} without task-specific fine-tuning (w/o Student Distill) results in significantly lower performance. Specifically, on the Beauty dataset, this variant yields only an 8.39\% improvement in Recall@10 over the HSTU backbone, compared to the 27.63\% gain achieved with distillation. This demonstrates that task-specific adaptation is more critical than raw model capability.

\subsection{Robustness and Sensitivity Analysis (RQ4)}

To further analyze the robustness of our framework, we split users and items into different subgroups to gain deeper insights into SCoTER's behavior under varying data sparsity conditions.

\textbf{Setup.} For users, following~\cite{tang2025think}, we divide them into four equal-sized groups based on sequence length. Group~0 represents users with the sparsest interactions, while higher group numbers indicate longer sequences. For items, consistent with previous studies~\cite{yang2023debiased, tang2024towards}, we categorize them into four quartiles based on interaction frequency. Group~0 consists of the least popular items, whereas higher group numbers indicate more popular items. 

\textbf{Results.} Figure~\ref{fig:group_analysis_beauty} illustrates the relative improvement of SCoTER-H over HSTU on the Beauty dataset. We observe consistent performance gains across all user and item groups, demonstrating that our reasoning framework is effective regardless of data sparsity.

\begin{figure}[t]
\centering
\includegraphics[width=1\linewidth]{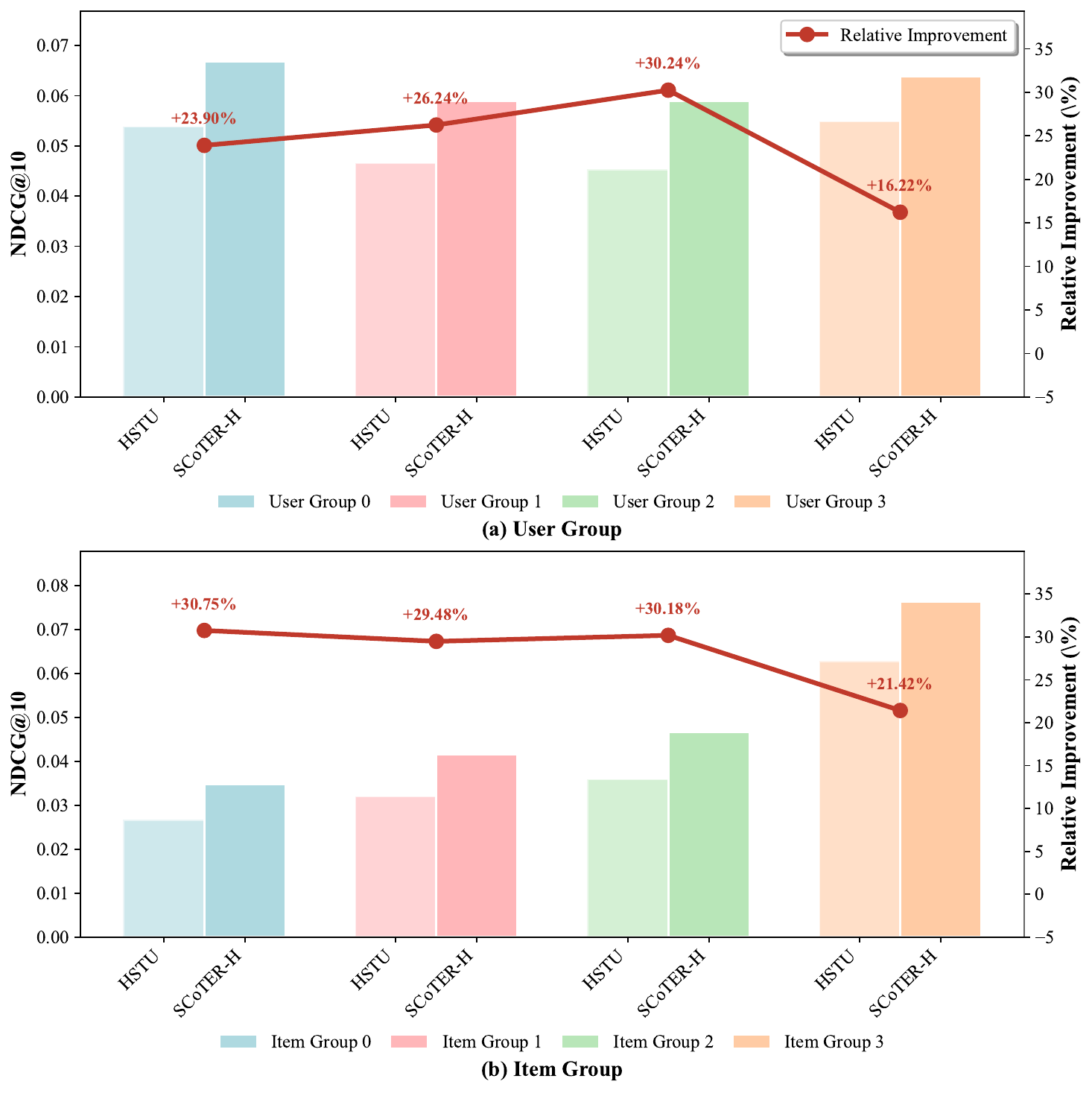}
\caption{Performance comparison across different user and item groups on the Beauty dataset. Groups are divided based on interaction frequency, where Group 0 represents the sparsest scenarios and Group 3 represents the densest.}
\label{fig:group_analysis_beauty}
\vspace{-2mm}
\end{figure}

\textbf{Sparse Scenarios Benefit Most.} Notably, SCoTER-H achieves substantial improvements in sparse scenarios. In User Group~0, which suffers from limited historical data, the model yields a 23.90\% improvement. This suggests that when collaborative signals are weak, the explicit reasoning chain helps the model deduce user intent effectively. Similarly, for Item Group~0 (tail items), we observe a remarkable 30.75\% improvement. This indicates that incorporating semantic reasoning helps alleviate the popularity bias often found in generative recommenders, allowing the model to better surface relevant but less popular items.

\textbf{Sweet Spot in Middle Groups.} We also notice that the relative improvement peaks in middle groups (e.g., User Group~2 shows +30.24\%), suggesting a sweet spot where the model effectively combines sufficient interaction history with reasoning logic to maximize performance. Even in the densest groups (Group~3), where the backbone is already strong, SCoTER maintains a significant uplift (+16.22\% for users, +21.42\% for items), confirming its additive value across all scenarios.

\begin{figure}[t]
\centering
\includegraphics[width=1\linewidth]{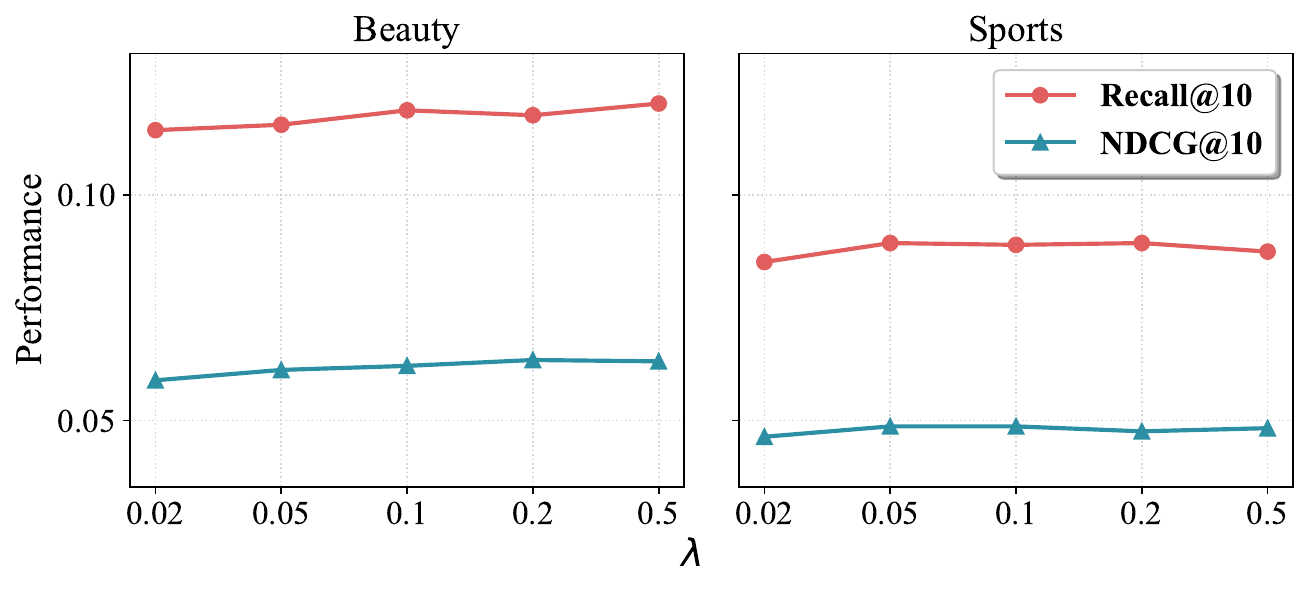}
\caption{Impact of the contrastive learning weight $\lambda$ on Beauty and Sports datasets. The results indicate that SCoTER is relatively robust to variations in $\lambda$.}
\label{fig:lambda_sensitivity}
\end{figure}

\textbf{Sensitivity to Contrastive Weight $\lambda$.}
Figure~\ref{fig:lambda_sensitivity} shows the impact of $\lambda$ on Beauty and Sports. Performance improves as $\lambda$ increases from 0.02 to 0.1, confirming that contrastive alignment is beneficial. The performance remains stable within 0.05--0.2 and peaks around 0.1. Further increasing $\lambda$ to 0.5 causes slight degradation, indicating that excessive alignment weight may overshadow the primary recommendation objective.

Efficiency analysis is provided in Appendix~\ref{app:efficiency}.

\begin{table}[ht]
  \centering 
  \caption{Relative improvement of our online A/B testing on the Tencent Advertising Platform.} 
  \label{tab:online_results} 
  
  \begin{tabular}{lr} 
    \toprule 
    \textbf{Online Metric} & \textbf{Relative Lift} \\ 
    \midrule 
    
    GMV (Overall) & +2.14\% \\
    GMV (Sparse Users) & +4.10\% \\
    GMV (Dense Users) & +1.49\% \\
    
    \midrule 
    
    Negative Feedback Rate & -0.24\% \\
    "Not Interested" Rate & -0.25\% \\
    
    \bottomrule 
  \end{tabular}
  \vspace{-0.2pt}
\end{table}

\subsection{Online A/B Test}

We validated SCoTER through deployment on the Tencent Advertising Platform. A one-week online A/B test with 5\% traffic compared SCoTER against the production baseline, using Gross Merchandise Value (GMV) as the primary metric.

As reported in Table \ref{tab:online_results}, SCoTER achieved a +2.14\% lift in overall GMV. Stratified analysis reveals that gains are most pronounced for sparse users (+4.10\%) compared to dense users (+1.49\%), highlighting SCoTER's potential to mitigate data sparsity---consistent with our offline findings. User experience also improved: negative feedback rate decreased by 0.24\% and ``not interested'' rate by 0.25\%, indicating better alignment with user preferences.



\section{Conclusion}

In this paper, we identify and address two challenges in applying CoT reasoning to recommendation: discovering effective reasoning patterns beyond brittle and hand-crafted heuristics, and transferring them to efficient models without collapsing their essential stepwise logic under low-latency demands. To tackle these challenges in a unified manner, we propose SCoTER, a novel framework featuring an automated GVM pipeline for pattern discovery and a structure-preserving architecture. The efficacy of this framework is validated through comprehensive experiments and real-world deployment. Comprehensive experiments demonstrate that SCoTER not only consistently outperforms state-of-the-art baselines but also achieves a 2.14\% lift in production GMV. Together, these results establish SCoTER as a systematic and empirically-grounded methodology for integrating structured LLM reasoning into recommender systems.

\bibliographystyle{ACM-Reference-Format}
\bibliography{sample-base}

\appendix

\newpage
\section{Experimental Setup and Hyperparameter}~\label{append:hyper}

\begin{table}[t]
\centering
\caption{Dataset statistics of the evaluation benchmarks. ``AvgLen'' represents the average length of item sequences.}
\label{tab:dataset}
\begin{tabular}{lcccc}
\toprule
Dataset & \#Users & \#Items & \#Interactions & AvgLen \\
\midrule
Beauty      & 22,363 & 12,101 & 198,502 & 8.88 \\
Instruments & 24,772 & 9,922  & 206,153 & 8.32 \\
Sports      & 35,598 & 18,357 & 296,337 & 8.32 \\
Yelp        & 30,431 & 20,033  &  316,354 &  10.40    \\
\bottomrule
\end{tabular}
\end{table}

\begin{figure}[t]
    \centering
    \includegraphics[width=\linewidth]{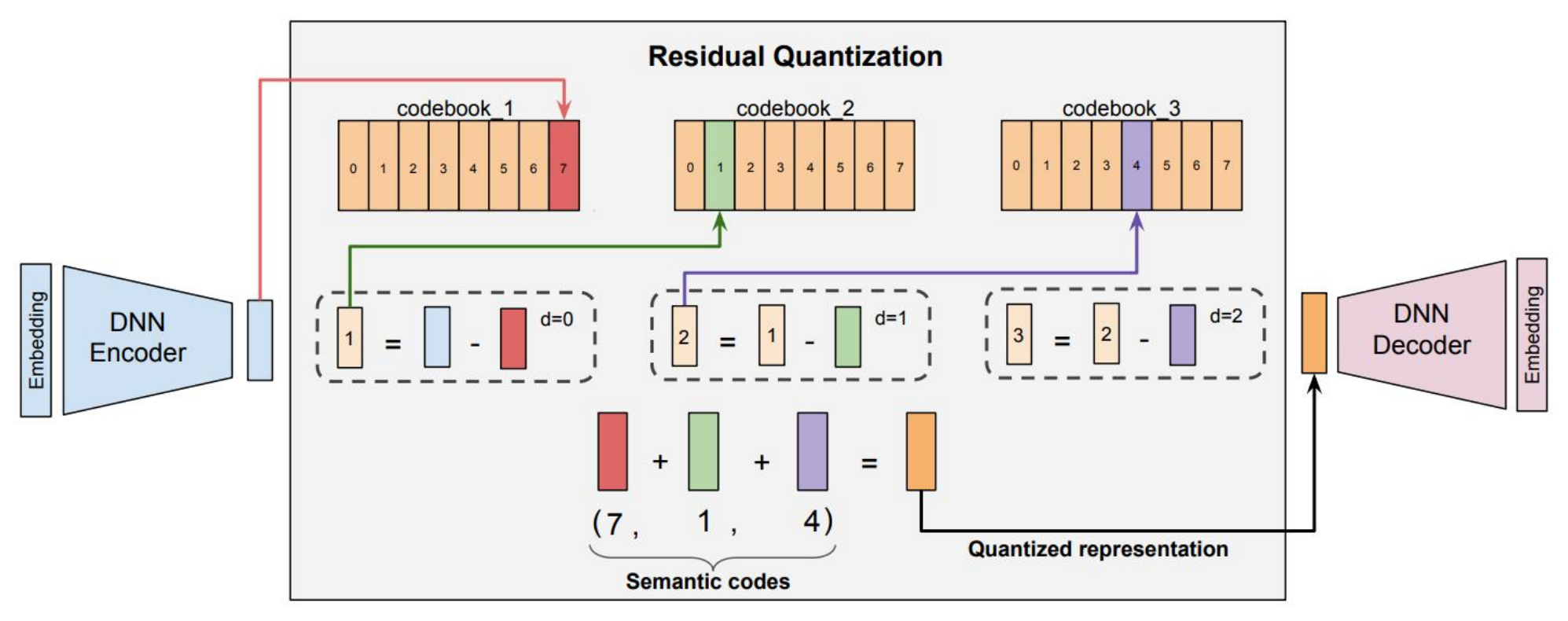}
    \caption{RQ-VAE for Semantic ID quantization via residual vector quantization (adapted from TIGER~\cite{rajput2023recommender}).}
    \label{fig:rqvae}
\end{figure}

\subsection{Backbone Models}
\label{app:backbones}

\paragraph{RQ-VAE for Semantic IDs (RQVAE)}
We use RQ-VAE (Residual-Quantized VAE) to quantize continuous item embeddings into discrete Semantic IDs (SIDs), as illustrated in Fig.~\ref{fig:rqvae} (original figure from TIGER~\cite{rajput2023recommender}).
RQ-VAE applies multi-stage residual quantization: the first codebook quantizes the latent vector, and subsequent codebooks quantize the remaining residuals in a coarse-to-fine manner. The final SID is the ordered tuple of selected code indices (one per codebook). To ensure uniqueness, an extra disambiguation code can be appended for items sharing the same prefix SID.

\begin{table}[t]
\centering
\small
\caption{RQVAE hyperparameters for Semantic ID quantization.}
\label{tab:rqvae_hparams}
\begin{tabular}{ll}
\toprule
\textbf{Component} & \textbf{Setting} \\
\midrule
Latent dim ($e\_dim$) & 32 \\
\# quantizers (levels) & 4 \\
Codebook sizes ($num\_emb\_list$) & [256, 256, 256, 256] \\
Commitment weight ($\beta$) & 0.25 \\
Reconstruction loss & MSE (default), weight=1.0 \\
Optimizer & Adam, lr=$1\text{e-}3$, weight\_decay=$1\text{e-}4$ \\
Batch size / epochs & 1024 / 20000 \\
\bottomrule
\end{tabular}
\end{table}

\paragraph{TIGER (Generative Backbone).}
We adopt TIGER as a generative backbone that represents each item as a sequence of discrete semantic tokens. TIGER adopts a standard T5~\cite{roberts2023scaling} encoder--decoder architecture for generative retrieval.
The encoder has 4 layers with $d_{\text{model}}{=}128$, $d_{\text{ff}}{=}1024$, 6 attention heads ($d_{kv}{=}64$), and dropout 0.1 (ReLU FFN).
The decoder also has 4 layers with the same hidden size/heads, and autoregressively predicts a 4-token semantic-ID sequence with vocabulary size 1025 (1024 codes + padding). Following the standard practice in generative retrieval, we use an RQ-VAE tokenizer with 4 codebooks, each containing 256 embeddings of dimension 32. The semantic inputs to RQ-VAE are derived from item title and description embeddings produced by Qwen3-8B-Embedding. During inference, we use beam search with beam size 20.

\paragraph{HSTU (with Semantic IDs).}
HSTU is a sequential recommendation model based on hierarchical sequential transduction units.
We augment it with a 4-level Semantic ID (SID) system (256 IDs per level, i.e., $4\times256$) and a 3-layer Transformer SID decoder (4 heads, hidden dim 256). For each item, we obtain an item-ID embedding and 4 SID embeddings (one per SID level), and sum them as the final input embedding to the HSTU encoder. The HSTU encoder uses embedding dim 50 with 2 HSTU blocks (1 head; $d_{qk}=50$, $d_v=50$), dropout 0.2, and max sequence length 20.

\subsection{Training Details and Hyperparameters}
\label{app:train_details}

We use Adam with learning rate $1\times 10^{-4}$ for both variants.
We apply early stopping based on validation performance.
The contrastive alignment weight $\lambda$ is set to 0.1 (shared across variants).
Table~\ref{tab:hyperparams_scoter} summarizes the key hyperparameters.

\begin{table}[t]
\centering
\small
\caption{Key hyperparameters for HSTU-based and TIGER-based variants.}
\label{tab:hyperparams_scoter}
\resizebox{\linewidth}{!}{%
\begin{tabular}{lcc}
\toprule
& \textbf{HSTU-backbone} & \textbf{TIGER-backbone} \\
\midrule
\multicolumn{3}{l}{\textit{Backbone}} \\
Encoder embedding dim $d_{\text{item}}$ & 50 & 128 \\
Encoder layers & 2 (HSTU blocks) &  T5 encoder  \\
\midrule
\multicolumn{3}{l}{\textit{Semantic ID / Tokenization}} \\
SID levels / codebooks & 4 & 4 \\
Per-level vocab size & 256 & 256 \\
SID decoder & 3-layer Transformer, 4 heads, dim 256 & T5 decoder \\
\midrule
\multicolumn{3}{l}{\textit{CoT Fusion}} \\
CoT hidden dim $D$ & 4096 & 4096 \\
$d_{\text{item}}$ & 50 & 128 \\
Dropout (fusion) & 0.1 & 0.1 \\
\midrule
\multicolumn{3}{l}{\textit{Optimization / Inference}} \\
Batch size & 512 & 256 \\
Learning rate & $1\times 10^{-4}$ & $1\times 10^{-4}$ \\
Max epochs & 300 & 200 \\
Early stop patience & 20 & 20 \\
Contrastive weight $\lambda$ & 0.1 & 0.1 \\
\bottomrule
\end{tabular}
}
\end{table}

\section{Efficiency Analysis}
\label{app:efficiency}

To evaluate the computational cost of our proposed framework, we compare the training and inference efficiency of SCoTER-H against the backbone model HSTU. The experiments were conducted on the same hardware environment (e.g., a single NVIDIA H20 GPU). Table \ref{tab:efficiency_appendix} details the training time per epoch and the total inference time for the full test set.

\begin{table}[h]
  \caption{Efficiency comparison between HSTU and SCoTER-H. Training time is reported per epoch, and inference time is reported for the full test set. SCoTER eliminates online LLM inference entirely.}
  \label{tab:efficiency_appendix}
  \centering
  \small
  \begin{tabular}{llcccc}
    \toprule
    \multirow{2}{*}{Dataset} & \multirow{2}{*}{Model} & \multicolumn{2}{c}{Training (per epoch)} & \multicolumn{2}{c}{Inference (full test set)} \\
    \cmidrule(lr){3-4} \cmidrule(lr){5-6}
    & & Time (s) & $\Delta$ & Time (s) & $\Delta$ \\
    \midrule
    \multirow{2}{*}{Beauty} 
    & HSTU & 22.98 & -- & 21.81 & -- \\
    & SCoTER-H & 33.22 & +44.56\% & 26.93 & +23.48\% \\
    \midrule
    \multirow{2}{*}{Sports} 
    & HSTU & 37.14 & -- & 34.39 & -- \\
    & SCoTER-H & 87.36 & +135.22\% & 43.20 & +25.62\% \\
    \bottomrule
  \end{tabular}
\end{table}

As shown in Table \ref{tab:efficiency_appendix}, SCoTER-H introduces a noticeable but manageable computational overhead compared to the raw backbone.
\textbf{Training Efficiency.} The training time increases noticeably (e.g., roughly +135.22\% on Sports). This overhead primarily stems from the additional gradient computations required for the reasoning adapter and the contrastive alignment loss. However, considering the significant performance gains, this cost is acceptable.
\textbf{Inference Latency.} For inference, SCoTER-H remains highly efficient. Unlike LLM-based methods that require seconds per generation, SCoTER-H operates in milliseconds, maintaining the high-throughput characteristic of traditional sequential models. The slight increase in inference time is due to the additional step-wise reasoning token generation, but since the reasoning chain is short (fixed steps), the latency remains within a practical range for real-time industrial deployment.

\section{Pseudo-code Details}
\label{app:pseudocode}

\begin{algorithm}[tb]
\caption{Generate-Validate-Mine (GVM) Pipeline for Automated Pattern Discovery}
\label{alg:gvm}
\begin{algorithmic}[1]
\State {\bfseries Input :} User sequences $\mathcal{S}=\{(S_i, i_{T+1}^{(i)})\}_{i=1}^{N}$, LLM $G$,
sentence encoder $\textsc{SentenceEncoder}(\cdot)$, similarity threshold $\gamma$,
candidate cluster set $\mathcal{K}$, item catalog embeddings $\{ \mathbf{v}_m \}_{m\in \mathcal{I}}$,
top exemplar number $N$
\State {\bfseries Output :} Optimal reasoning pattern $P^*$ and Optimal CoT template $T^*$

\State \textcolor{blue}{/* \hspace{0.2em} Phase 1: Generate \hfill */} 
\State $\mathcal{C}_{all} \leftarrow \emptyset$
\For{$i=1$ {\bfseries to} $N$}
    \State $C_i \leftarrow G.\textsc{Generate}(S_i, \texttt{prompt};\ \tau_{gen}=0.7,\ p=0.9)$
    \State Parse $C_i \Rightarrow (p_i \in \texttt{<cot\_path>},\ \texttt{<reason>},\ R_i \in \texttt{<recommendations>})$
    \State $\mathcal{C}_{all} \leftarrow \mathcal{C}_{all} \cup \{(p_i, R_i, i_{T+1}^{(i)})\}$
\EndFor
\State $\mathcal{C}_{div} \leftarrow \textsc{PruneByCosSim}(\mathcal{C}_{all}, \gamma)$
\hfill \textcolor{blue}{$\triangleright$ cosine threshold $\gamma=0.85$}

\State \textcolor{blue}{/* \hspace{0.2em} Phase 2: Validate \hfill */} 
\For{{\bfseries each} $(p, R, i_{T+1}) \in \mathcal{C}_{div}$}
    \State $\widehat{Y}_{20}(C) \leftarrow \emptyset$
    \For{{\bfseries each} textual recommendation $r \in R$} 
        \State $\mathbf{u} \leftarrow \textsc{SentenceEncoder}(r)$ \hfill \textcolor{blue}{$\triangleright$ semantic retrieval}
        \State $\hat{y} \leftarrow \arg\max_{m\in\mathcal{I}} \cos(\mathbf{u}, \mathbf{v}_m)$ \hfill \textcolor{blue}{$\triangleright$ top-1 mapping}
        \State $\widehat{Y}_{20}(C) \leftarrow \widehat{Y}_{20}(C) \cup \{\hat{y}\}$ 
    \EndFor
    \State $\textsc{Score}(C) \leftarrow \mathbb{I}[ i_{T+1} \in \widehat{Y}_{20}(C) ]$
    \hfill \textcolor{blue}{$\triangleright$ Recall@20 scoring}
\EndFor
\State {\footnotesize (Empirical)} $\ \textsc{Score}(C)$ estimates the expected Recall@20 over users.

\State \textcolor{blue}{/* \hspace{0.2em} Phase 3: Mine \hfill */} 
\State $\mathcal{E} \leftarrow \{\ \mathbf{e}(p)=\textsc{SentenceEncoder}(p)\ :\ (p,\cdot,\cdot)\in\mathcal{C}_{div}\ \}$
\hfill \textcolor{blue}{$\triangleright$ embed logic-flow in \texttt{<cot\_path>}}
\State $K_c \leftarrow \arg\max_{k\in\mathcal{K}} \textsc{Silhouette}(\textsc{KMeans}(\mathcal{E}, k))$
\hfill \textcolor{blue}{$\triangleright$ select \#clusters}
\State $\{\mathcal{P}_j\}_{j=1}^{K_c} \leftarrow \textsc{KMeans}(\mathcal{E}, K_c)$

\For{$j=1$ {\bfseries to} $K_c$}
    \State $\textsc{Quality}(\mathcal{P}_j) \leftarrow \textsc{Mean}\big(\{\textsc{Score}(C): C \in \mathcal{P}_j\}\big)$
    \hfill \textcolor{blue}{$\triangleright$ cluster-level quality}
\EndFor
\State $P^* \leftarrow \arg\max_{j}\textsc{Quality}(\mathcal{P}_j)$

\State \textcolor{blue}{/* \hspace{0.2em} Template Abstraction \hfill */} 
\State $\mu \leftarrow \textsc{Centroid}(P^*)$
\State $\mathcal{X} \leftarrow \textsc{TopNByCosSim}(P^*, \mu, N)$
\hfill \textcolor{blue}{$\triangleright$ pick $N$ exemplars near the centroid}
\State $T^* \leftarrow G.\textsc{AbstractPattern}(\mathcal{X}, \texttt{meta-prompt})$
\hfill \textcolor{blue}{$\triangleright$ synthesize Optimal CoT Template}

\State {\bfseries Return:} $P^*,\ T^*$
\end{algorithmic}
\end{algorithm}

\begin{algorithm}[tb]
\caption{SCoTER: Offline Training and Online Inference (No Online LLM Calls)}
\label{alg:scoter}
\begin{algorithmic}[1]
\State {\bfseries Input :} Training data $\mathcal{D}_{train}$ (splits $\mathcal{D}_{val},\mathcal{D}_{test}$), discovered pattern $P^*$ and template $T^*$,
Teacher LLM $G_T$, Student LLM $G_S$,
Backbone $\textsc{Backbone}(\cdot)$,
sentence encoder $\textsc{SentenceEncoder}(\cdot)$,
item embedding table $\textsc{ItemEmb}(\cdot)$,
fusion modules $\{\textsc{Adapter}, \textsc{PosEnc}, \textsc{CrossAttn}, \textsc{GatedFuse}\}$, weight $\lambda$
\State {\bfseries Output :} Optimized recommendation parameters $\Theta$ and distilled student $G_S$

\State \textcolor{blue}{/* \hspace{0.2em} Step 1: Structured Distillation (Offline) \hfill */} 
\State $\mathcal{D}_{cot} \leftarrow \emptyset$
\For{{\bfseries each} $(S_i, i_{T+1}^{(i)}) \in \mathcal{D}_{train}$}
    \State $C_i \leftarrow G_T.\textsc{Generate}(S_i, T^*)$ \hfill \textcolor{blue}{$\triangleright$ template-consistent CoT}
    \State $\mathcal{D}_{cot} \leftarrow \mathcal{D}_{cot} \cup \{(S_i, C_i)\}$
\EndFor
\State $G_S \leftarrow \textsc{FineTune}(G_S, \mathcal{D}_{cot})$

\State \textcolor{blue}{/* \hspace{0.2em} Step 2: Precompute Reasoning Embeddings (Offline) \hfill */} 
\For{{\bfseries each} $S_i \in \mathcal{D}_{train} \cup \mathcal{D}_{val} \cup \mathcal{D}_{test}$}
    \State $C_i \leftarrow G_S.\textsc{Generate}(S_i, T^*)$
    \State Extract steps $C_i=(c_{i,1}, c_{i,2}, \ldots, c_{i,K})$
    \For{$j=1$ {\bfseries to} $K$}
        \State $\mathbf{e}_{i,j} \leftarrow \textsc{SentenceEncoder}(c_{i,j})$
        \hfill \textcolor{blue}{$\triangleright$ step embedding}
    \EndFor
    \State $\mathbf{H}_i \leftarrow [\mathbf{e}_{i,1}; \mathbf{e}_{i,2}; \ldots; \mathbf{e}_{i,K}]$
    \hfill \textcolor{blue}{$\triangleright$ structured reasoning matrix}
    \State $\textsc{Store}(S_i \mapsto \mathbf{H}_i)$ \hfill \textcolor{blue}{$\triangleright$ offline cache}
\EndFor

\State \textcolor{blue}{/* \hspace{0.2em} Step 3: Train Recommendation Model (Offline) \hfill */} 
\Repeat
    \State Sample mini-batch $\mathcal{B}=\{(S, i_{T+1})\} \sim \mathcal{D}_{train}$
    \State $\mathbf{h}_{\text{seq}} \leftarrow \textsc{Backbone}(S)$ \hfill \textcolor{blue}{$\triangleright$ sequence embeddings}
    \State $\mathbf{H} \leftarrow \textsc{Retrieve}(S)$ \hfill \textcolor{blue}{$\triangleright$ no LLM call}

    \State \textcolor{blue}{/* \hspace{0.2em} Order-Preserving Fusion \hfill */} 
    \State $\mathbf{Z} \leftarrow \textsc{Adapter}(\mathbf{H})$ \hfill \textcolor{blue}{$\triangleright$ projection + normalization}
    \State $\mathbf{Z}^{pos} \leftarrow \mathbf{Z} + \textsc{PosEnc}$ \hfill \textcolor{blue}{$\triangleright$ positional encoding}
    \State $\mathbf{A} \leftarrow \textsc{CrossAttn}(\mathbf{h}_{\text{seq}}, \mathbf{Z}^{pos}, \mathbf{Z}^{pos})$
    \hfill \textcolor{blue}{$\triangleright$ cross-attention}
    \State $\mathbf{E}_{\text{fused}} \leftarrow \textsc{GatedFuse}(\mathbf{h}_{\text{seq}}, \mathbf{A})$
    \hfill \textcolor{blue}{$\triangleright$ gated fusion}

    \State \textcolor{blue}{/* \hspace{0.2em} Joint Objective \hfill */} 
    \State $\mathcal{L}_{rec} \leftarrow -\log p(i_{T+1} \mid \mathbf{E}_{\text{fused}})$
    \State $\mathbf{z}_{K}^{pos} \leftarrow \mathbf{Z}^{pos}[K]$ \hfill \textcolor{blue}{$\triangleright$ final reasoning step}
    \State $\mathcal{L}_{\text{InfoNCE}} \leftarrow \textsc{InfoNCE}(\mathbf{z}_{K}^{pos}, \textsc{ItemEmb}(i_{T+1}))$
    \State $\mathcal{L}_{total} \leftarrow \mathcal{L}_{rec} + \lambda \mathcal{L}_{\text{InfoNCE}}$
    \State Update $\Theta \leftarrow \Theta - \eta \nabla_{\Theta}\mathcal{L}_{total}$
\Until{convergence}

\State \textcolor{blue}{/* \hspace{0.2em} Online Inference (No Online LLM Calls) \hfill */} 
\State {\bfseries Input :} User sequence $S$
\State $\mathbf{h}_{\text{seq}} \leftarrow \textsc{Backbone}(S)$
\State $\mathbf{H} \leftarrow \textsc{Retrieve}(S)$ \hfill \textcolor{blue}{$\triangleright$ cached $\mathbf{H}\in\mathbb{R}^{K\times D}$}
\State $\mathbf{E}_{\text{fused}} \leftarrow \textsc{Fuse}(\mathbf{h}_{\text{seq}}, \mathbf{H})$ \hfill \textcolor{blue}{$\triangleright$ same as training}
\State {\bfseries Return :} $\textsc{RankAllItems}(\mathbf{E}_{\text{fused}})$ \hfill \textcolor{blue}{$\triangleright$ full-catalog ranking}
\end{algorithmic}
\end{algorithm}

\newpage
~\newpage

~\newpage

\section{Ablation Studies and Pattern Analysis on SCoTER-T}~\label{ab:on_tiger}

In this section, we provide a detailed analysis of the automated pattern discovery and architectural components. Note that all experiments in this section are conducted using the SCoTER-T variant (TIGER backbone) on the Beauty dataset.

\begin{figure}[t]
\centering
\includegraphics[width=1.0\linewidth]{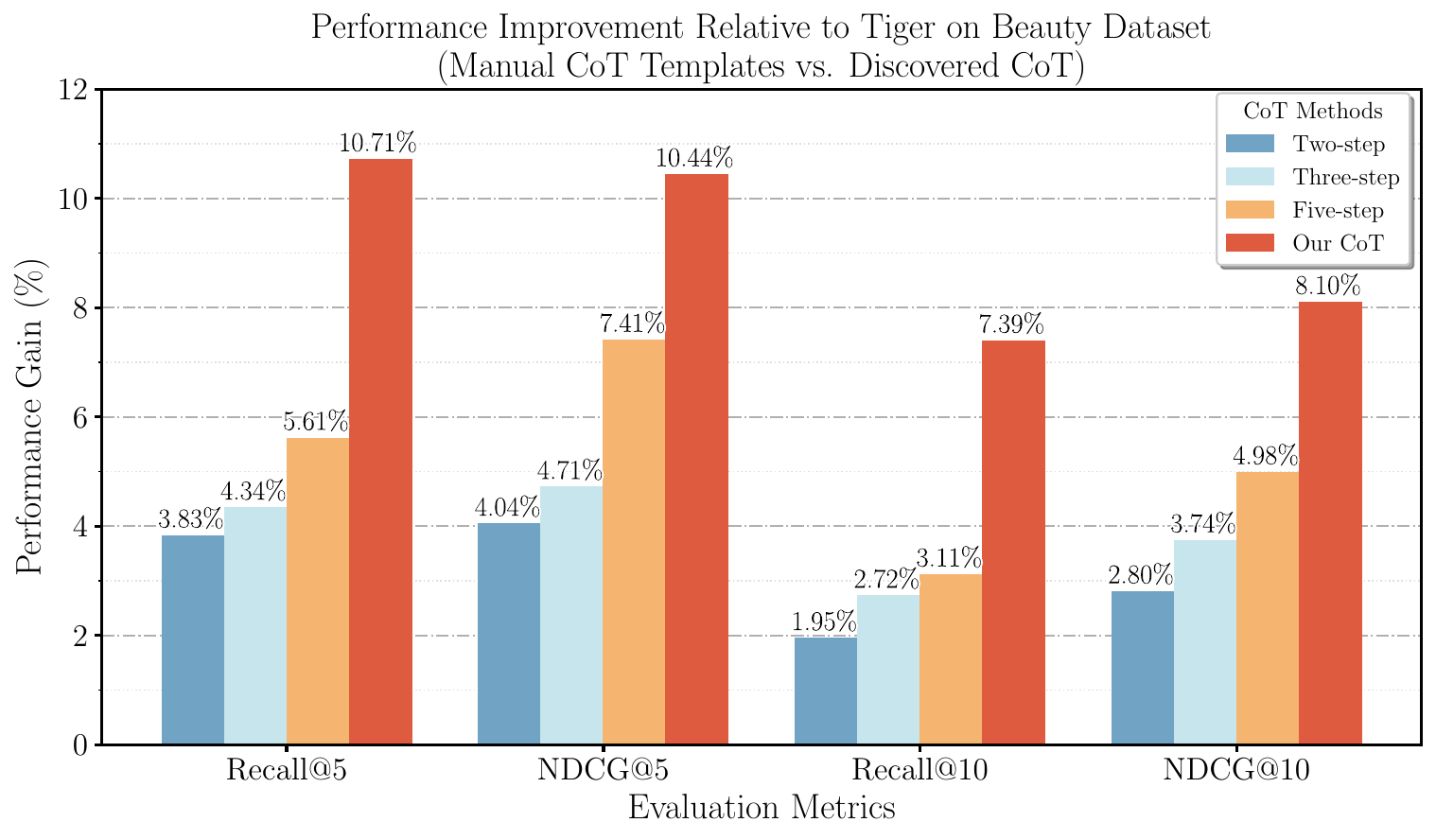}
\caption{Performance Improvement Relative to Tiger on \textbf{Beauty} with backbone integration. 
Manual CoT Templates (Two-step, Three-step, Five-step) are compared with the automatically Discovered CoT.}
\label{fig:beauty_k5k10_gain}
\end{figure}

\subsection{Automated Pattern Discovery (RQ2)}
To rigorously evaluate our automated discovery within the SCoTER-T framework, we compared the GVM-discovered pattern against several manual templates. These manual templates, detailed in Appendix~\ref{sec:appendix_prompts}, represent general-purpose reasoning structures derived from domain knowledge and expert intuition. As shown in Figure~\ref{fig:beauty_k5k10_gain}, when integrated with the TIGER backbone, the GVM-discovered pattern demonstrates a substantial advantage over the manual alternatives across all metrics on the Beauty dataset. Notably, its 10.71\% improvement in Recall@5 nearly doubles the gain of the best-performing manual template, establishing a robust performance gap with gains ranging from 7.39\% (NDCG@5) to 10.71\%.

\begin{table*}[t]
\centering
\caption{Ablation results on Beauty dataset.}
\label{tab:ablation}
\begin{tabular}{lcccc}
\toprule
Variant & Recall@5 &Recall@10 & NDCG@5 & NDCG@10 \\
\midrule
Full model        & \textbf{0.0434 \footnotesize$(-)$}& \textbf{0.0656 \footnotesize$(-)$}  & \textbf{0.0276 \footnotesize$(-)$} &\textbf{0.0347 \footnotesize$(-)$} \\
w/o Position     &0.0424 ($\downarrow 2.30\%$) & 0.0647 ($\downarrow 1.37\%$) & 0.0270 ($\downarrow 2.17\%$) &0.0341 ($\downarrow 1.73\%$) \\
w/o Contrastive   &0.0413 ($\downarrow 4.84\%$)& 0.0639 ($\downarrow 2.59\%$)& 0.0267 ($\downarrow 3.26\%$)&0.0337 ($\downarrow 2.88\%$)\\
w/o Step-wise CoT embedding &0.0407 ($\downarrow 6.22\%$) & 0.0624 ($\downarrow 4.88\%$)& 0.0265 ($\downarrow 3.99\%$)&0.0335 ($\downarrow 3.46\%$) \\
    w/o Student Distill &0.0394 ($\downarrow 9.22\%$) & 0.0609 ($\downarrow 7.16\%$)& 0.0262 ($\downarrow 5.07\%$)&0.0331 ($\downarrow 4.61\%$) \\
Tiger & 0.0392 ($\downarrow 9.68\%$) & 0.0594 ($\downarrow 9.45\%$)& 0.0257 ($\downarrow 6.88\%$)& 0.0321  ($\downarrow 7.49\%$)\\
\bottomrule
\end{tabular}
\end{table*}

This superiority extends beyond integrated settings to standalone LLM generation (Table~\ref{tab:llm_recommender}). On both a fine-tuned Qwen3-8B and the larger DeepSeek-R1, our pattern consistently outperforms the manual alternative. This consistent outperformance across diverse settings and models highlights a fundamental architectural advantage, which stems from the GVM pipeline’s systematic, data-driven approach.

This advantage can be understood by deconstructing the GVM process. Manual templates usually rely on generalized human experience. While providing a reasonable starting point, this generality prevents them from capturing the fine-grained, dynamic signals specific to current user interactions, thus limiting their practical effectiveness. Conversely, our GVM pipeline systematically uncovers superior patterns. The Generate phase explores a vast landscape of potential reasoning patterns directly from the data, moving beyond pre-defined assumptions. Crucially, the Validate phase acts as an empirical filter, scoring each candidate based on its actual recommendation performance, thereby creating a feedback loop that ensures only data-supported reasoning paths survive. Finally, the Mine phase distills the most effective and generalizable logic from this validated set. This systematic discovery process allows us to identify latent, data-specific reasoning structures that are not just theoretically sound, but empirically proven to be more beneficial.

\subsection{Structure-Preserving Integration (RQ3)}~\label{append:inter_ab_on_tiger}
To validate the structure-preserving architecture of SCoTER-T, we conducted a systematic ablation study (Table~\ref{tab:ablation}). The results demonstrate that each component is essential, with their removal causing measurable performance degradation ranging from 2.30\% to 6.22\% in Recall@5.

Step-wise CoT embedding emerges as the most critical component, yielding the largest performance degradation, with the Recall@5 score dropping by 6.22\%. It preserves the progressive refinement inherent in reasoning chains. Each step builds upon previous insights to iteratively narrow the recommendation space. Collapsing this multi-step structure into a single vector discards these intermediate logical dependencies, forcing the model to recommend without the benefit of stepwise deliberation.

\begin{table}[t]
\centering
\caption{LLM-as-recommender performance on Beauty dataset. Models generate recommendations directly from reasoning chains without backbone integration.}

\label{tab:llm_recommender}
\setlength{\tabcolsep}{2.0pt}
\begin{tabular}{lcccc}
\toprule
\multirow{2}{*}{} & \multicolumn{2}{c}{DeepSeek-R1} & \multicolumn{2}{c}{Qwen3-8B (Fine-tuned)} \\
\cmidrule(lr){2-3} \cmidrule(lr){4-5}
& Recall@20 & NDCG@20 & Recall@20 & NDCG@20 \\
\midrule
Two-step & 0.0078 & 0.0041 & 0.0340 & 0.0138 \\
Three-step & 0.0089 & 0.0047 & 0.0344 & 0.0142 \\
Five-step & 0.0098 & 0.0052 & \underline{0.0352} & \underline{0.0145} \\
SCoTER & \textbf{0.0105} & \textbf{0.0056} & \textbf{0.0363} & \textbf{0.0152} \\
\bottomrule
\end{tabular}
\end{table}

Beyond this structural foundation, positional encoding and contrastive learning provide complementary enhancements. First, positional encoding preserves sequential order. Without explicit positional signals, the model struggles to differentiate between an early hypothesis exploration and a final refinement. This ambiguity hinders the application of appropriate attention weights across different reasoning stages, thereby degrading the model's ability to leverage the sequential structure.  Consequently, removing this component leads to a significant 2.30\% drop in Recall@5. Second, contrastive learning aligns this reasoning with recommendation objectives. It provides a crucial supervisory signal that steers the logic beyond mere internal coherence to match user preferences.  Its removal, therefore, causes an even larger degradation, with the Recall@5 score dropping by 4.84\% to 0.0413.

Finally, the study reveals a synergistic effect that amplifies their individual contributions. Removing both positional encoding and contrastive learning simultaneously results in a performance drop greater than the sum of their individual impacts. This indicates a cooperative relationship: positional encoding preserves the sequential logic, while contrastive learning aligns this logic with recommendation objectives.

\subsection{Integration Synergy (RQ4)}
A pivotal insight is revealed when comparing the outcomes of standalone LLM-based recommendations against our fully integrated model. The best direct-generation configuration—a fine-tuned Qwen3-8B using our CoT pattern—achieves a Recall@20 of 0.0363. In contrast, the integrated approach reaches a substantially higher Recall@10 of 0.0656. This gap highlights the fundamental value of fusing complementary information sources.

Our architecture's advantage stems from its ability to synergize two distinct modalities. LLM generation relies on explicit semantic logic but lacks the implicit collaborative signals that are the foundation of modern recommenders, such as latent patterns of item co-occurrence or user taste clusters. To bridge this gap, the recommender backbone provides strong collaborative priors, while the CoT module injects an interpretable reasoning layer. This fusion creates recommendations that are both empirically grounded and logically justified—a capability neither component possesses alone.

Beyond synergy, the results reveal another key insight: task-specific adaptation is more critical than raw model scale. This is demonstrated by the smaller, fine-tuned Qwen3-8B consistently outperforming the much larger DeepSeek-R1. This outcome validates our structured distillation, demonstrating its ability to transfer sophisticated reasoning into an efficient model. Ultimately, this confirms a viable path for integrating LLM reasoning into large-scale, production-ready systems.

\subsection{Robustness Analysis on Sports Dataset}
To validate whether the robustness trends observed on the Beauty dataset are generalizable, we conducted the same subgroup analysis on the Sports dataset. The results are presented in Figure \ref{fig:group_analysis_sports}.

The trends on the Sports dataset align well with our findings on Beauty, often showing even more pronounced improvements. For Item Groups, the improvement on tail items (Group 0) is exceptionally high at 49.84\%. This reinforces the conclusion that our reasoning-enhanced framework significantly boosts the discoverability of long-tail items, likely because the reasoning module can leverage item text descriptions to find matches even when interaction data is scarce.

For User Groups, we again observe universal improvements. User Group 2 exhibits the highest relative gain of 45.50\%, while the cold-start users in Group 0 also see a massive 31.28\% boost. These results confirm that SCoTER provides a robust enhancement mechanism that adapts well to varying degrees of user activity and item popularity across different domains.

\begin{figure}[h]
\centering
\includegraphics[width=1.0\linewidth]{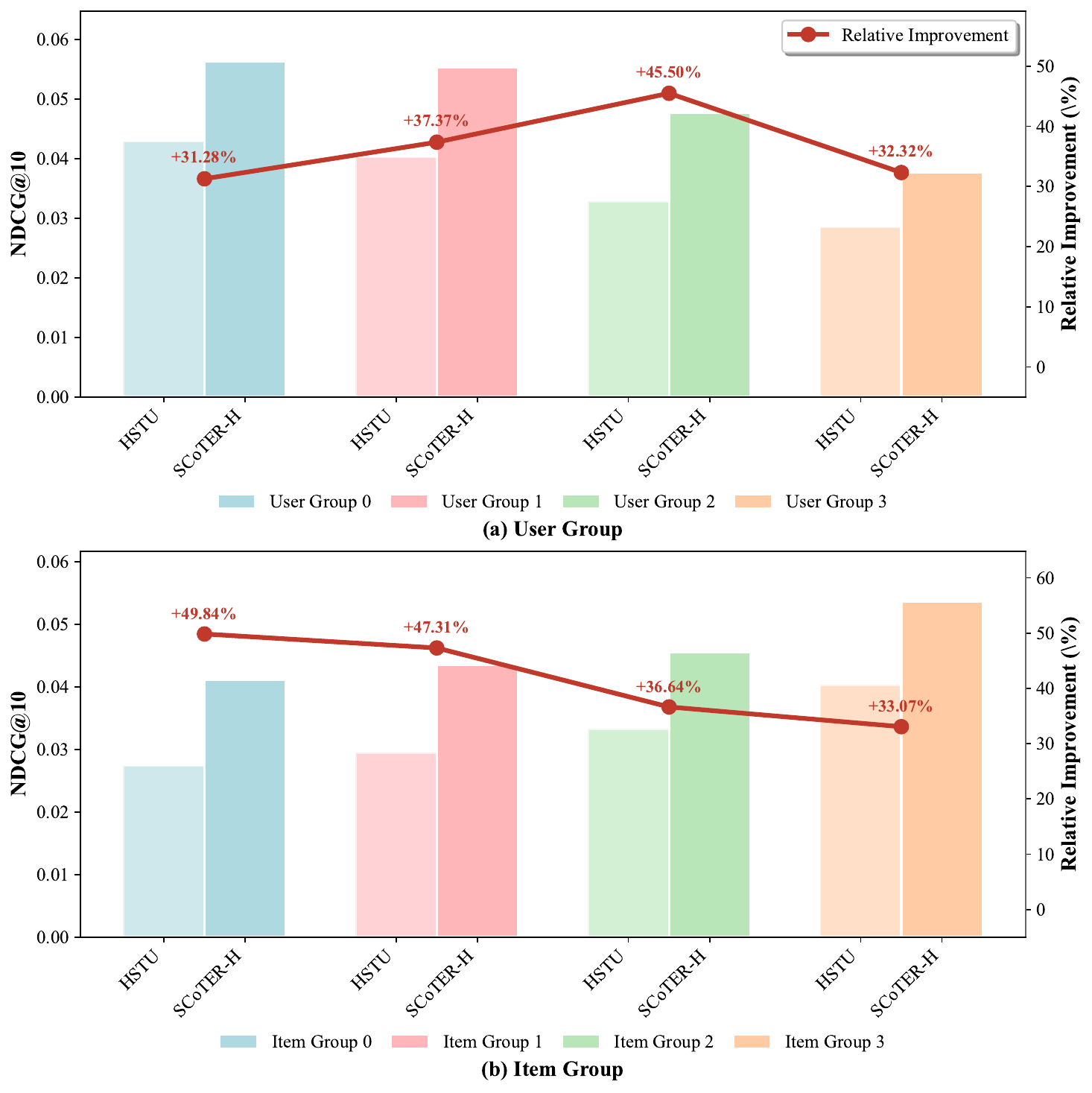}
\caption{Performance comparison across different user and item groups on the Sports dataset. Higher group numbers indicate higher interaction frequency.}
\label{fig:group_analysis_sports}
\end{figure}

\section{Theoretical Analysis}
\label{sec:appendix_theory}

In this section, we provide the formal definitions, objectives, and rigorous proofs supporting the design of our Structure-Preserving Integration module.

\subsection{Formal Definitions}

We first define the key components of our problem setting.

\begin{itemize}
    \item \textbf{Sequential Recommendation:} Given users $\mathcal{U}$ and items $\mathcal{I}$, each user $u \in \mathcal{U}$ has a chronologically ordered interaction history $S = [i_1, \ldots, i_T] \in \mathcal{I}^T$. The goal is to learn a model $q_\theta$ that approximates the ground-truth next-item distribution $p^*(Y| S)$.

    \item \textbf{Reasoning Pattern ($P$):} A pattern $P=(p_1, \ldots, p_k) \in \mathcal{P}$, with a fixed length $k$, is a high-level reasoning template, e.g., $P=$ ("Analyze history" $\rightarrow$ "Identify preferences" $\rightarrow$ "Predict features" $\rightarrow$ "Recommend items").

    \item \textbf{Reasoning Chain ($C$):} For a sequence $S$ and pattern $P$, a reasoning chain $C = (c_1, \ldots, c_k)$ is generated by a pattern-conditioned LLM, denoted $C \sim G_{P}(S)$. Each sentence $c_j$ instantiates the template $p_j$ with user-specific details. The space of all possible chains is denoted as $\mathcal{C}$.

    \item \textbf{Encoders:} We define two types of encoders:
    \begin{itemize}
        \item An encoder $\psi: \mathcal{C} \to \mathbb{R}^{k \times d}$ is \textbf{order-sensitive} if $\psi(C) \neq \psi(C_\pi)$ for some permutation $\pi \neq \text{id}$. It represents the chain as a sequence of $k$ step-embeddings (e.g., via Transformers).
        \item An encoder $\phi: \mathcal{C} \to \mathbb{R}^{d}$ is \textbf{order-agnostic} if $\phi(C) = \phi(C_\pi)$ for all permutation $\pi$. It collapses the sequence of step-embeddings into a single $d$-dimensional vector representation (e.g., via mean pooling).
    \end{itemize}

    \item \textbf{$(\rho,\delta)$-Order Sensitivity:} A task is \emph{$(\rho,\delta)$-order sensitive} if with probability at least $\rho$, for a user sequence $S$, a reasoning chain $C$ can be generated whose predictive distribution changes by at least $\delta$ (in Total Variation distance) under step permutation. Formally, $\Pr(S \in \Omega_\delta) \ge \rho$, where 
    \[
    \Omega_\delta = \{S \mid \exists C \sim G_P(S), \pi \neq \text{id} \text{ s.t. } \text{TV}(q_\theta(\cdot|S,C), q_\theta(\cdot|S,C_\pi)) \ge \delta \}.
    \]
\end{itemize}

\subsection{Optimization Objective and Justification}

\paragraph{Optimization Objective.}
To jointly identify an optimal pattern $P^*$ and train a model $\theta$ that approximates $p^*(Y|S)$, our framework maximizes the expected log-likelihood by marginalizing over chains $C \sim G_{P^{*}}(S)$:
\begin{equation}
\label{eq:objective}
\max_{\theta} \mathbb{E}_{S,Y \sim p^*} \left[ \log \mathbb{E}_{C \sim G_{P^{*}}(S)} \left[ q_\theta(Y|S,C) \right] \right]
\end{equation}
This objective effectively decouples pattern discovery (finding $P^*$) from model training (optimizing $\theta$).

\paragraph{Information-Theoretic Justification.}
Our framework's architecture is motivated by decomposing the predictive value of a reasoning chain, $I(C;Y|S)$. Using an operator $f(C)=P$ to extract the pattern from a chain, this value decomposes as:
\begin{equation}
\label{eq:info_decomp}
I(C;Y| S) \;=\; I(f(C);Y| S) + I(C;Y| S, f(C))
\end{equation}
The first term, $I(P;Y|S)$, motivates our GVM pipeline for \textbf{Pattern Discovery}. The second term, $I(C;Y|S,P)$, quantifies the value of the chain's ordered details, motivating our \textbf{Structure Preservation} architecture.

\subsection{Theorems and Proofs}

We now provide the formal proofs for the advantages of preserving sequential order.

\begin{theorem}[Information-Theoretic Advantage]
\label{theorem:Advantage}
Let $\mathcal{H}_{\text{seq}} = \psi(C)$ and $\mathcal{H}_{\text{bag}} = \phi(C)$ be representations from order-sensitive and order-agnostic encoders, respectively. Since $\mathcal{H}_{\text{bag}}$ can be derived from $\mathcal{H}_{\text{seq}}$, the Data Processing Inequality implies:
\[
I(\mathcal{H}_{\text{seq}}; Y | S)\ \ge\ I(\mathcal{H}_{\text{bag}}; Y | S)
\]
\end{theorem}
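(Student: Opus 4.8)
The plan is to obtain the inequality as a direct application of the Data Processing Inequality (DPI), once the phrase ``$\mathcal{H}_{\text{bag}}$ can be derived from $\mathcal{H}_{\text{seq}}$'' is made precise. First I would fix the joint law of all relevant variables: $(S,Y)\sim p^*$, a chain $C \sim G_{P^*}(S)$ produced with randomness independent of $Y$ given $S$, the order-sensitive encoding $\mathcal{H}_{\text{seq}} = \psi(C) \in \mathbb{R}^{k\times d}$, and the order-agnostic encoding $\mathcal{H}_{\text{bag}} = \phi(C) \in \mathbb{R}^{d}$. The structural fact that drives the proof is that in our architecture $\phi$ \emph{factors through} $\psi$: writing $\psi(C) = (h_1,\dots,h_k)$ for the tuple of step-embeddings, we have $\phi(C) = g(\psi(C))$ with $g:\mathbb{R}^{k\times d}\to\mathbb{R}^{d}$ the permutation-invariant pooling map (e.g.\ $g(h_1,\dots,h_k) = \frac1k\sum_j h_j$), which is continuous and hence Borel measurable. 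Thus $\mathcal{H}_{\text{bag}}$ is a deterministic measurable function of $\mathcal{H}_{\text{seq}}$.

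Second, conditioning everywhere on $S$, the identity $\mathcal{H}_{\text{bag}} = g(\mathcal{H}_{\text{seq}})$ shows that $\mathcal{H}_{\text{bag}}$ is conditionally independent of $Y$ given $(\mathcal{H}_{\text{seq}},S)$, i.e.\ the triple $Y \;-\; \mathcal{H}_{\text{seq}} \;-\; \mathcal{H}_{\text{bag}}$ is a Markov chain given $S$. I would then expand $I(\mathcal{H}_{\text{seq}},\mathcal{H}_{\text{bag}}; Y \mid S)$ by the chain rule in two ways; since $I(\mathcal{H}_{\text{bag}}; Y \mid S, \mathcal{H}_{\text{seq}}) = 0$ (a function of the conditioning variables carries no extra information) and $I(\mathcal{H}_{\text{seq}}; Y \mid S, \mathcal{H}_{\text{bag}}) \ge 0$ by nonnegativity of conditional mutual information, subtracting yields $I(\mathcal{H}_{\text{bag}}; Y \mid S) \le I(\mathcal{H}_{\text{seq}}; Y \mid S)$. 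Because conditional mutual information is itself the expectation over $S$ of the per-realization quantity, no further averaging argument is needed; this is exactly the stated claim.

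Finally, I would record the equality condition, as it is what motivates the companion quantitative bound: equality holds iff $I(\mathcal{H}_{\text{seq}}; Y \mid S, \mathcal{H}_{\text{bag}}) = 0$, i.e.\ iff $\mathcal{H}_{\text{bag}}$ is already a sufficient statistic for $Y$ within $\mathcal{H}_{\text{seq}}$ given $S$. On a $(\rho,\delta)$-order-sensitive task this sufficiency fails --- with probability at least $\rho$ the predictive distribution genuinely changes under step permutation, information that an order-collapsed vector cannot retain --- so the inequality is strict and (in the follow-up result) its slack is lower-bounded in terms of $\rho$ and $\delta$. I expect the one genuine subtlety, rather than any computational difficulty, to be in Step one: justifying $\phi = g\circ\psi$ as a legitimate modeling assumption (an arbitrary order-agnostic encoder need not be a post-processing of an arbitrary order-sensitive one --- the theorem is meaningful precisely because our $\psi$ exposes the step-embeddings that $\phi$ pools), together with the measure-theoretic bookkeeping required to invoke the DPI for the continuous-valued encodings, namely working with regular conditional distributions given $S$ and checking that all maps in sight are Borel.
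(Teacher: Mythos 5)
Your proposal is correct and follows the same route as the paper: the paper's proof simply notes that $\mathcal{H}_{\text{bag}}$ is a deterministic function of $\mathcal{H}_{\text{seq}}$ (e.g.\ via mean pooling) and invokes the Data Processing Inequality, which is exactly your argument, with your chain-rule expansion amounting to a proof of the DPI itself rather than a citation of it. Your added care about the Markov structure given $S$, the measurability of the pooling map, and the observation that $\phi = g\circ\psi$ is a modeling assumption rather than a property of arbitrary order-agnostic encoders are all legitimate refinements the paper leaves implicit, but they do not change the approach.
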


\begin{proof}
Since the order-agnostic representation $\mathcal{H}_{\text{bag}}$ is a deterministic function of the order-sensitive representation $\mathcal{H}_{\text{seq}}$ (e.g., via operations like mean pooling or summation that discard positional information), the Markov chain $Y \to \mathcal{H}_{\text{seq}} \to \mathcal{H}_{\text{bag}}$ holds given $S$. The result follows directly from the Data Processing Inequality, which states that no local processing of a variable ($\mathcal{H}_{\text{seq}}$ to $\mathcal{H}_{\text{bag}}$) can increase its mutual information with the target variable $Y$.
\end{proof}

\begin{lemma}[Performance Lower Bound]
\label{lemma:LowerBound}
For any model $q_\theta$, the expected recall is lower-bounded by:
\[
\mathbb{E}[\operatorname{Recall@K}] \ge \mathbb{E}[m_K(S,C)] - \mathbb{E}\left[ \mathrm{TV}\left( p^*_S, q_\theta(\cdot | S,\mathrm{Encoder}(C)) \right) \right]
\]
where $m_K(S,C)$ is the sum of probabilities assigned by the model to the top-$K$ predicted items, and $p^*_S$ denotes the ground-truth distribution $p^*(\cdot|S)$.
\end{lemma}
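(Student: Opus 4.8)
\emph{Proof plan.} The plan is to establish the inequality pointwise in the pair $(S,C)$ and then take expectations. Fix a user sequence $S$ with ground-truth distribution $p^*_S = p^*(\cdot\mid S)$, and fix a reasoning chain $C$; write $q = q_\theta(\cdot\mid S,\mathrm{Encoder}(C))$ for the model's predictive distribution and let $\mathcal{T} = \mathcal{T}_K(S,C)\subseteq\mathcal{I}$ denote the set of $K$ items to which $q$ assigns the largest probabilities, with a fixed deterministic tie-breaking rule. By definition $m_K(S,C) = \sum_{i\in\mathcal{T}} q(i) = q(\mathcal{T})$, i.e.\ the total mass $q$ places on its own top-$K$ set. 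Since a top-$K$ recommender returns exactly $\mathcal{T}$ and the true next item is drawn $Y\sim p^*_S$, the conditional expected recall is $\mathbb{E}[\operatorname{Recall@K}\mid S,C] = \Pr_{Y\sim p^*_S}(Y\in\mathcal{T}) = p^*_S(\mathcal{T})$.

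The key step is to bound the gap $p^*_S(\mathcal{T}) - q(\mathcal{T})$ via the variational characterization of total variation, $\mathrm{TV}(p,q) = \sup_{A\subseteq\mathcal{I}}\big(q(A)-p(A)\big)$. Instantiating this with $A=\mathcal{T}$ gives $q(\mathcal{T}) - p^*_S(\mathcal{T})\le \mathrm{TV}(p^*_S,q)$, hence
\[
p^*_S(\mathcal{T}) \;\ge\; q(\mathcal{T}) - \mathrm{TV}(p^*_S,q) \;=\; m_K(S,C) - \mathrm{TV}\big(p^*_S,\,q_\theta(\cdot\mid S,\mathrm{Encoder}(C))\big),
\]
which is the claimed inequality conditional on $(S,C)$.

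Finally, I would take the expectation of both sides over $S\sim p^*$, $C\sim G_{P}(S)$, and the internal draw $Y\sim p^*_S$ (which only enters the left-hand side). By the tower property $\mathbb{E}[\operatorname{Recall@K}] = \mathbb{E}_{S,C}\big[p^*_S(\mathcal{T}_K(S,C))\big]$, and linearity of expectation applied to the pointwise bound yields
\[
\mathbb{E}[\operatorname{Recall@K}] \;\ge\; \mathbb{E}\big[m_K(S,C)\big] - \mathbb{E}\big[\mathrm{TV}(p^*_S,\,q_\theta(\cdot\mid S,\mathrm{Encoder}(C)))\big],
\]
as stated. I do not expect a genuine obstacle here; the proof is short and robust. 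The only points requiring care are bookkeeping ones: fixing the tie-breaking rule so $\mathcal{T}_K$ is well defined, and observing that the argument is agnostic to whether $\mathrm{Encoder}$ is the order-sensitive $\psi$ or the order-agnostic $\phi$ (it is the downstream comparison of the two resulting $\mathrm{TV}$ terms, via Theorem~\ref{theorem:Advantage} and the data-processing inequality, that separates them). The mild subtlety is ensuring the ``Recall@K'' in the bound matches the single-next-item convention under which $\mathbb{E}[\operatorname{Recall@K}\mid S,C]=p^*_S(\mathcal{T})$; for a normalized multi-target variant the same argument goes through after dividing by $\min(K,\#\text{targets})$.
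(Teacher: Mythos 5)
Your proof is correct and is essentially the same argument as the paper's: identify the conditional expected recall with $p^*_S(\mathcal{T})$, bound $|p^*_S(\mathcal{T}) - q(\mathcal{T})|$ by the total variation distance applied to the top-$K$ event, recognize $q(\mathcal{T}) = m_K$, and take expectations. Your additional bookkeeping (tie-breaking, explicit conditioning on $C$, the single-next-item convention) only makes explicit what the paper leaves implicit.
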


\begin{proof}
Let $A_q$ be the set of top-$K$ items predicted by $q_\theta$. For a given sequence $S$, the expected Recall@K is the probability that the ground truth item falls within this set:
\[
\mathbb{E}[\text{Recall@K} | S] = \mathbb{E}[\mathbb{I}_{Y \in A_q} | S] = p^*_S(A_q).
\]
By the definition of Total Variation (TV) distance, for any event $A$: $|p^*_S(A) - q_\theta(A|S)| \le \text{TV}(p^*_S, q_\theta)$.
Rearranging this inequality gives a lower bound for the ground truth probability:
\[
p^*_S(A_q) \ge q_\theta(A_q|S) - \text{TV}(p^*_S, q_\theta).
\]
By definition, $q_\theta(A_q|S) = \sum_{y \in A_q} q_\theta(y|S) = m_K$. Substituting this back:
\[
\mathbb{E}[\text{Recall@K} | S] \ge m_K - \text{TV}(p^*_S, q_\theta).
\]
Taking the expectation over all samples $S$ yields the lemma.
\end{proof}

To prove the final advantage, we first introduce a supporting lemma regarding the error inherent in order-agnostic encoding.

\begin{lemma}[Collision Penalty]
For any $S \in \Omega_\delta$, there exists a permutation $\pi \neq \text{id}$ such that for an order-agnostic encoder $\phi$:
\[
\max\{\text{TV}(p^*(\cdot|S,C), q_\phi), \text{TV}(p^*(\cdot|S,C_\pi), q_\phi)\} \ge \frac{\delta}{2}
\]
\end{lemma}
\begin{proof}
Since $S \in \Omega_\delta$, we have $\text{TV}(p^*(\cdot|S,C), p^*(\cdot|S,C_\pi)) \ge \delta$. An order-agnostic encoder produces the exact same representation for both $C$ and $C_\pi$, thus $q_\phi(\cdot|\phi(C)) = q_\phi(\cdot|\phi(C_\pi))$. By the triangle inequality:
\begin{align*}
\delta &\le \text{TV}(p^*(\cdot|S,C), p^*(\cdot|S,C_\pi)) \\
&\le \text{TV}(p^*(\cdot|S,C), q_\phi) + \text{TV}(q_\phi, p^*(\cdot|S,C_\pi))
\end{align*}
If both terms on the right were less than $\delta/2$, their sum would be less than $\delta$, leading to a contradiction. Thus, at least one term must be $\ge \delta/2$.
\end{proof}

\begin{theorem}[Order-Aware Performance Advantage]
\label{theorem:PerformanceAdvantage}
For a $(\rho,\delta)$-order sensitive task, an order-sensitive encoder $\psi$ achieves a performance advantage over an order-agnostic encoder $\phi$:
\begin{align*}
\mathbb{E}_\psi[\text{Recall@K}] - \mathbb{E}_\phi[\text{Recall@K}]
&\geq \big(\mathbb{E}[m_K]_\psi - \mathbb{E}[m_K]_\phi\big) \\
&\quad + \frac{\rho\delta}{2}
      - \mathbb{E}[\text{TV}(p^*_S, q_\psi)]
\end{align*}
\end{theorem}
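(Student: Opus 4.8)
The plan is to sandwich the two recall terms: a lower bound on $\mathbb{E}_\psi[\text{Recall@K}]$ taken directly from Lemma~\ref{lemma:LowerBound}, and an upper bound on $\mathbb{E}_\phi[\text{Recall@K}]$ that isolates an irreducible $\tfrac{\rho\delta}{2}$ loss forced by order collapse; subtracting the two and rearranging then produces exactly the claimed inequality. Concretely, applying Lemma~\ref{lemma:LowerBound} with $\mathrm{Encoder}=\psi$ gives $\mathbb{E}_\psi[\text{Recall@K}] \ge \mathbb{E}[m_K]_\psi - \mathbb{E}[\mathrm{TV}(p^*_S,q_\psi)]$, which is the source of the first and third summands on the right-hand side. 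Everything then reduces to establishing $\mathbb{E}_\phi[\text{Recall@K}] \le \mathbb{E}[m_K]_\phi - \tfrac{\rho\delta}{2}$, which supplies the remaining two.

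I would prove this bound in two stages. \emph{Stage 1 (order collapse forces a TV error).} Fix $S \in \Omega_\delta$ together with a witnessing pair $(C, C_\pi)$, so that the order-faithful model satisfies $\mathrm{TV}(q_\theta(\cdot|S,C), q_\theta(\cdot|S,C_\pi)) \ge \delta$. Because $\phi$ is order-agnostic, $\phi(C)=\phi(C_\pi)$, so the $\phi$-pipeline emits a single predictive distribution $\bar q_S$ for both realizations; the triangle inequality then gives $\mathrm{TV}(\bar q_S, q_\theta(\cdot|S,C)) + \mathrm{TV}(\bar q_S, q_\theta(\cdot|S,C_\pi)) \ge \delta$, so the $\phi$-pipeline's distance to the order-faithful target is at least $\delta/2$ on average over the two equiprobable realizations. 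Taking expectations over $S$ and using $\Pr(S\in\Omega_\delta)\ge\rho$ yields an expected TV gap of at least $\rho\delta/2$ between the $\phi$-pipeline's prediction and the target, which we identify with $p^*_S$ under the approximation assumption underlying Equation~\ref{eq:objective}. \emph{Stage 2 (from TV gap to recall deficit).} Writing $\text{Recall@K}$ as the $p^*_S$-mass of $q_\phi$'s top-$K$ set and $m_K$ as the $q_\phi$-mass of that same set, their difference equals $\sum_{i\in\mathrm{Top}_K(q_\phi)}\big(p^*_S(i)-q_\phi(i)\big)$; in the order-sensitive regime the top-$K$ set is precisely where $q_\phi$ carries its excess mass, so this difference is at most $-\rho\delta/2$ in expectation, i.e.\ $\mathbb{E}_\phi[\text{Recall@K}] \le \mathbb{E}[m_K]_\phi - \rho\delta/2$. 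Subtracting this from the $\psi$-bound and collecting terms completes the proof; Theorem~\ref{theorem:Advantage} can additionally be invoked to argue that the surplus information retained by $\psi$ makes $\mathbb{E}[m_K]_\psi - \mathbb{E}[m_K]_\phi \ge 0$ when $\mathcal{H}_{\text{bag}}$ is obtained by collapsing $\mathcal{H}_{\text{seq}}$, so that the advantage is strictly positive whenever $\tfrac{\rho\delta}{2} > \mathbb{E}[\mathrm{TV}(p^*_S,q_\psi)]$.

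The main obstacle is Stage 2: the clean identity $\mathbb{E}[m_K]-\mathbb{E}[\text{Recall@K}] = \mathbb{E}[\mathrm{TV}(p^*_S,q)]$ holds only when $q$'s top-$K$ set coincides with the super-level set $\{i: q(i)\ge p^*_S(i)\}$, whereas in general one only has $\big|\mathbb{E}[m_K]-\mathbb{E}[\text{Recall@K}]\big| \le \mathbb{E}[\mathrm{TV}(p^*_S,q)]$. To land the $\tfrac{\rho\delta}{2}$ term cleanly I would either restrict to the mild ``top-heavy'' regime in which the model's ranking is consistent with it placing excess mass on its own top-$K$ items, or absorb the residual slack into the definition of the order-sensitivity parameter $\delta$. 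A secondary subtlety is making precise, in Stage 1, the identification of the order-faithful target with the ground truth $p^*_S$ rather than merely with the best order-sensitive model; this is exactly the point at which the modeling assumption of Equation~\ref{eq:objective} is consumed.
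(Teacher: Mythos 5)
Your proposal does not follow the paper's argument, and the divergence is instructive. The paper's proof is three lines: it instantiates Lemma~\ref{lemma:LowerBound} once for $\psi$ and once for $\phi$, \emph{subtracts the two lower bounds}, and then substitutes the collision-penalty bound $\mathbb{E}[\mathrm{TV}(p^*_S,q_\phi)] \ge \rho\delta/2$. You correctly refuse to do this: from $X \ge x$ and $Y \ge y$ one cannot conclude $X - Y \ge x - y$; lower-bounding the advantage requires a lower bound on $\mathbb{E}_\psi[\text{Recall@K}]$ together with an \emph{upper} bound on $\mathbb{E}_\phi[\text{Recall@K}]$. Your plan therefore isolates exactly the missing ingredient, namely $\mathbb{E}_\phi[\text{Recall@K}] \le \mathbb{E}[m_K]_\phi - \tfrac{\rho\delta}{2}$, and your Stage~1 essentially reproduces (somewhat more carefully, by making the averaging over $C$ and $C_\pi$ explicit) the paper's expected-fitting-error lemma for $\phi$.

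The gap is the one you name yourself in Stage~2, and it is real: the TV argument gives $|p^*_S(A_q) - q_\phi(A_q|S)| \le \mathrm{TV}(p^*_S, q_\phi)$, hence only $\mathbb{E}_\phi[\text{Recall@K}] \le \mathbb{E}[m_K]_\phi + \mathbb{E}[\mathrm{TV}(p^*_S,q_\phi)]$, with the wrong sign on the TV term. A large fitting error for $\phi$ degrades its \emph{guaranteed} recall but does not cap its \emph{actual} recall, so the upper bound you need does not follow from the $(\rho,\delta)$-order-sensitivity hypothesis alone; your proposed repairs (a ``top-heavy'' ranking assumption, or folding the slack into $\delta$) are genuine additional hypotheses absent from the theorem statement. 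Be aware that this gap is not yours alone: it is precisely what the paper's invalid subtraction in its Step~2 conceals, so the theorem as stated is not established by the paper's own proof either. What both arguments actually support is a comparison of the Lemma~\ref{lemma:LowerBound} \emph{guarantees} --- the guarantee for $\psi$ is $\mathbb{E}[m_K]_\psi - \mathbb{E}[\mathrm{TV}(p^*_S,q_\psi)]$ while the guarantee for $\phi$ is at most $\mathbb{E}[m_K]_\phi - \tfrac{\rho\delta}{2}$ --- not a comparison of the recalls themselves. A minor further point: your Stage~1 works with the main text's definition of $\Omega_\delta$ (perturbation of $q_\theta$) whereas the appendix's collision lemma perturbs $p^*$; the identification you flag via Equation~\ref{eq:objective} is indeed where that discrepancy has to be resolved.
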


\begin{proof}
First, we apply Lemma~\ref{lemma:LowerBound} to both encoders:
\begin{align*}
\mathbb{E}_\psi[\text{Recall@K}] &\ge \mathbb{E}[m_K]_\psi - \mathbb{E}[\text{TV}(p^*_S, q_\psi)] \\
\mathbb{E}_\phi[\text{Recall@K}] &\ge \mathbb{E}[m_K]_\phi - \mathbb{E}[\text{TV}(p^*_S, q_\phi)]
\end{align*}
Subtracting the second inequality from the first (note: to bound the difference $\Delta$, we consider the worst-case lower bound for $\psi$ and best case for $\phi$, but structurally we compare the lower bounds directly to show potential):
\[
\Delta \ge (\mathbb{E}[m_K]_\psi - \mathbb{E}[m_K]_\phi) + (\mathbb{E}[\text{TV}(p^*_S, q_\phi)] - \mathbb{E}[\text{TV}(p^*_S, q_\psi)])
\]
Using the Collision Penalty Lemma, for the order-agnostic encoder $\phi$, the expected error on the sensitive set $\Omega_\delta$ is at least $\delta/2$. Thus, the total expected error is bounded by:
\[
\mathbb{E}[\text{TV}(p^*_S, q_\phi)] \ge \rho \cdot \frac{\delta}{2} + (1-\rho) \cdot 0 = \frac{\rho\delta}{2}.
\]
Substituting this back into the difference equation yields the final result.
\end{proof}

\section{CoT Templates} \label{sec:appendix_prompts}

This section details the manual Chain-of-Thought (CoT) templates used for comparison against our GVM-discovered pattern. These templates represent heuristic-based reasoning structures of increasing complexity. For the experiments, the following strings were used to guide the LLM's reasoning process inside the `<cot$\_$path>` tag.

\textbf{Two-step Template from~\cite{yi2025recgpt}} \\
A direct, two-stage reasoning path focusing on mining interests and then recommending.\\
\textit{\textless cot\_path\textgreater User Interest Mining -> Item Tag Prediction \& Recommendation \textless/cot\_path\textgreater}

\begin{figure*}[htbp]
\newlength{\oldcolumnsep}
\setlength{\oldcolumnsep}{\columnsep}
\setlength{\columnsep}{0pt}

\noindent 
\begin{minipage}{1\textwidth}
\begin{tcolorbox}[colback=white,colframe=white!40!black,
title=\textbf{Prompt: Inferring Recommendation Paths from User Behavior},
    left=2mm,   
    right=2mm,  
    top=2mm,    
    bottom=2mm  
]

\textbf{You are an expert in recommendation algorithms.} Based on the user's historical behavior data, please infer a reasonable analysis and recommendation path, and predict 20 different product features that the user may be interested in the future.

\vspace{1mm}
\textbf{Requirements:}
\begin{enumerate}
    \item First output \texttt{<cot\_path>}, with a reasoning chain of at least two steps, each step $\leq$8 words, connected with "->".
    \item \texttt{<reason>} must be placed after \texttt{<cot\_path>}, and the content must strictly follow each step in \texttt{<cot\_path>} to expand the reasoning one by one.
    \item Finally output 20 recommended product feature descriptions.
\end{enumerate}

\vspace{1mm}
\textbf{Output format requirements:}

\texttt{<cot\_path>} \\
\textit{[Summarize the core reasoning path here: Step 1 -> Step 2-> \textbf{$\cdots$} -> Item recommendation]} \\
\texttt{</cot\_path>}

\vspace{2mm} 

\texttt{<reason>} \\
\textit{[Explain the reasoning logic step by step according to the steps in cot\_path]} \\
\texttt{</reason>}

\vspace{2mm} 

\texttt{<recommendations>} \\
\texttt{<item>}Detailed feature description of product 1, including category, brand, function, characteristics, etc.\texttt{</item>} \\
\centerline{\textbf{$\cdots$}}
\texttt{<item>}Detailed feature description of product 20, including category, brand, function, characteristics, etc.\texttt{</item>} \\
\texttt{</recommendations>}

\vspace{0.5mm}
\end{tcolorbox}
\end{minipage}

\caption{The structure of the prompt used to guide the LLMs for generating candidate CoT paths and recommended item features. The prompt requires a structured output including a concise reasoning path (\texttt{<cot\_path>}), detailed reasoning (\texttt{<reason>}), and 20 specific product feature recommendations (\texttt{<recommendations>}).}
\label{fig:prompt_structure}

\end{figure*}

\textbf{Three-step Template} \\
This template adds an explicit intermediate step for summarizing a user profile before interest extraction.\\
\textit{\textless cot\_path \textgreater User Profile Summary -> User Interest Extraction -> Item Recommendation \textless/cot\_path\textgreater}

\textbf{Five-step Template} \\
A more granular template that breaks down the analysis into multiple distinct phases, from data analysis to feature prediction before the final recommendation.\\
\textit{\textless cot\_path \textgreater Behavioral Data Analysis -> Interest Pattern Recognition -> Preference Trend Analysis -> Predictive Feature Generation -> Targeted Item Recommendation \textless/cot\_path\textgreater}

\paragraph{Mined Pattern (GVM-Discovered)} 
Unlike manual templates based on general heuristics, GVM discovers patterns that capture domain-specific logic (e.g., separating stable vs. recent interests in Beauty, or identifying complementary gear in Sports).

\noindent\textit{\textbf{Beauty (Universal Style):}} \\
\textit{\textless cot\_path \textgreater Analyze sequential purchase history -> Distinguish stable vs. recent interests -> Expand via shared item attributes -> Predict high-confidence features -> Item Recommendation \textless/cot\_path\textgreater}

\noindent\textit{\textbf{Instruments:}} \\
\textit{\textless cot\_path \textgreater Analyze the purchase pattern -> Music interest inference -> Instrument focus identification -> Accessory expansion -> Feature prediction \textless/cot\_path\textgreater}

\noindent\textit{\textbf{Sports:}} \\
\textit{\textless cot\_path \textgreater Purchase pattern recognition -> Outdoor/Athletic interest mapping -> Complementary gear expansion-> Item Recommendation \textless/cot\_path\textgreater}

\noindent\textit{\textbf{Yelp:}} \\
\textit{\textless cot\_path \textgreater Purchase analysis -> Cuisine/Location interest mapping -> Category expansion -> Feature prediction \textless/cot\_path\textgreater}

\begin{figure*}[htbp]
\setlength{\oldcolumnsep}{\columnsep}
\setlength{\columnsep}{0pt}

\noindent
\begin{minipage}{1\textwidth}
\begin{tcolorbox}[colback=white,colframe=white!40!black, 
title=\textbf{Prompt: Mining Optimal Patterns from CoT Clusters},
    left=2mm,
    right=2mm,
    top=2mm,
    bottom=2mm
]

\textbf{You are an expert in recommendation system reasoning patterns.} Analyze the following CoT cluster and propose better reasoning paths.

\textbf{Cluster ID: x}
\begin{itemize}
    \item \textbf{Performance}: Rank 1 in K clusters $|$ \{R@20 value\} Recall@20 
    \item \textbf{Representative CoT Paths}:
    \begin{enumerate}
        \item \texttt{<cot\_path >} $\cdots$ \texttt{</cot\_path>}
        \item[] \hspace{1cm} \textbf{$\cdots$} 
        \item \texttt{<cot\_path >} $\cdots$ \texttt{</cot\_path>}
    \end{enumerate}
\end{itemize}

\noindent\hrulefill 

\textbf{Analysis \& Output:}

\textbf{\#\#\# 1. Pattern Summary}
\begin{itemize}
    \item \textbf{Core Logic}: \textit{[One sentence describing the main reasoning approach]}
    \item \textbf{Key Steps}: \textit{[The most important characteristics of this pattern]}
\end{itemize}

\textbf{\#\#\# 2. Performance Assessment}
\begin{itemize}
    \item \textbf{Why it works}: \textit{[What makes this pattern achieve its current hit rate]}
    \item \textbf{Limitations}: \textit{[Main weaknesses preventing higher performance]}
\end{itemize}

\textbf{\#\#\# 3. Optimized Paths}
\begin{itemize}
    \item \textbf{Best Path (each step $\leq$ 8 words)}: \textit{[Step1 -> Step2 -> ... -> Stepk -> Item Recommendation]}
    \item \textbf{Improvement Logic}: \textit{[How these address the identified limitations and why they should perform better]}
\end{itemize}

\vspace{1mm}
\textbf{Requirements:} Keep the CoT path \textbf{clear and not overly complex}

\end{tcolorbox}
\end{minipage}

\caption{The prompt structure used in the Mine phase. The LLM acts as an expert analyst to abstract a single, optimal reasoning template from a cluster of high-performing candidate chains.}
\label{fig:mining_prompt}

\end{figure*}

\begin{figure*}[htbp]
\centering
\begin{tcolorbox}[
    colback=white,
    colframe=white!40!black, 
    boxrule=0.8pt,
    arc=2pt,
    title=\textbf{Example},
    coltitle=white,
    fonttitle=\bfseries,
    left=2mm, right=2mm, top=2mm, bottom=2mm
]
\small

\textbf{[Instruction Template]} \\
You are an expert recommendation system. Your task is to analyze user purchase patterns and predict their next likely purchase based on their sequential purchase history.

\vspace{1mm}
\textbf{Reasoning Pattern (High-level Steps)}: 
Analyze sequential purchase history $\to$ Distinguish stable vs. recent interests $\to$ Expand via shared item attributes $\to$ Predict high-confidence features $\to$ Item Recommendation

\vspace{1mm}
\textbf{Required Output Format}:
Start with \textbf{Overall Analysis}. Then provide detailed analysis for each step in the Reasoning Pattern. End with \textbf{Summary}.

\vspace{2mm}
\textbf{[Input Data]} \\
Analyze the following user purchase history and provide a comprehensive \textbf{structured rationale} following the exact format:

\vspace{1mm}
\textit{Purchase History in Chronological Order:}
\begin{itemize}
    \item \textbf{\#\# Item 1}: \textit{OPI Nail Lacquer, Simmer and Shimmer}. Desc: OPI Burlesque Colors.
    \item \textbf{\#\# Item 2}: \textit{OPI Red Shatter Crackle Nail Polish}. Desc: Red Shatter Nail Polish.
    \item \textbf{\#\# Item 3}: \textit{SKIN79 Prestige Beblesh Balm BB Cream}. Desc: Whitening, wrinkles improvement... Diamond Powder...
    \item \textbf{\#\# Item 4}: \textit{WAWO 15 Color Professional Camouflage Concealer}. Desc: Camouflage for blemishes, scars... perfect for under-eye area...
\end{itemize}

\vspace{2mm}
\hrule 
\vspace{2mm}

\textbf{[Model Output: Structured Rationale]}

\vspace{1mm}
\textbf{Overall Analysis}: The user demonstrates a clear focus on detailed personal grooming. Their behavior suggests a preference for "professional-grade" home beauty solutions, transitioning from expressive decoration to functional care.

\vspace{1mm}
\textbf{[Analyze sequential purchase history]}: The sequence begins with vibrant hand aesthetics (Items 1-2), prioritizing visual impact. It then shifts chronologically towards corrective facial skincare (Items 3-4), indicating a move from purely decorative cosmetics to functional, skin-improving formulations.

\vspace{1mm}
\textbf{[Distinguish stable vs. recent interests]}:
\textbf{Stable:} A consistent demand for "Salon-quality/Professional tools" (e.g., OPI brand, 15-color palette).
\textbf{Recent:} The immediate focus has pivoted to "Skin Health" and "Restoration" (whitening, moisturizing), showing a specific intent to improve skin condition rather than just painting over it.

\vspace{1mm}
\textbf{[Expand via shared item attributes]}: We synthesize the \textit{Location} from the stable history (Hands) with the \textit{Function} from the recent history (Therapeutic/Skin Softening). The expansion leads to "Restorative Hand Treatments."

\vspace{1mm}
\textbf{[Predict high-confidence features]}: The next item should likely offer: "Thermal/Heating" (for deep treatment), "Moisturizing" (aligning with skincare logic), and "Spa-like efficacy."

\vspace{1mm}
\textbf{[Item Recommendation]}: \textbf{A professional-grade thermal paraffin wax bath or home spa kit}.
\textit{Rationale}: This product category perfectly synthesizes the user's profile. It addresses the "Hand Care" interest established by the OPI purchases but elevates it to the "Treatment/Care" level seen in the recent skincare purchases.

\end{tcolorbox}
\caption{A complete example of data synthesis. The upper part shows the pattern-injected instruction and user history provided to the Teacher LLM. The lower part shows the structured rationale generated by the LLM.}
\label{fig:full_synthesis_example}
\end{figure*}
\end{document}